\theoremstyle{plain}
\newtheorem{theorem}{Theorem}
\newtheorem{corollary}{Corollary}
\newtheorem{lemma}{Lemma}
\newtheorem{proposition}{Proposition}
\newtheorem{assumption}{Assumption}
\newtheorem{remark}{Remark}
\newtheorem{definition}{Definition}
\newtheorem{example}{Example}
\begin{document}

\title{Risk measuring under liquidity risk  }
\author{Erindi Allaj \\ University of Rome Tor Vergata, Via Columbia 2, 00133 Rome, Italy \\erindi.allaj@uniroma2.it}
\maketitle
\begin{abstract}
We present a general framework for measuring the liquidity risk. The theoretical framework defines  a class of risk measures that incorporate the liquidity risk into the standard risk measures.  We consider a one-period risk measurement model. The liquidity risk is defined as the risk that a given security or a portfolio of securities cannot be easily sold or bought by the financial institutions without causing  significant changes in prices. The new risk measures present some differences with respect to the standard risk measures. In particular, they are increasing monotonic and convex cash sub-additive on long positions. The contrary, in certain situations, holds for the sell positions. For the long positions case, we provide these new risk measures with a dual representation. In some specific cases also the sell positions can be equipped with a dual representation. We apply our framework to the situation in which financial institutions break up large trades into many small ones.  Dual representation results are also obtained. We give many practical examples of risk measures and derive for each of them the respective capital requirement. As a particular example, we discuss the VaR measure.
\\\textbf{Keywords} Risk measures $\cdot$ Liquidity risk $\cdot$ Dual representation $\cdot$ Trade splitting
\\\textbf{JEL Classification} G12 . G13 
\end{abstract}
\section{Introduction}
Understanding and measuring the risk of financial positions is nowdays an important task as a huge number of financial institutions have been experiencing in recent years  serious financial problems. The recent crisis highlights the importance of an accurate risk measurement system for financial institutions.  A good risk measurement system is of great value to the financial institutions in particular and to the economy in general. 

While there are various techniques for quantifying market, credit and operational risk, generally developed by financial institutions themselves or imposed by financial regulators,  there is one more component of risk which, before the financial crisis began, has received less attention than it deserves.  
In fact, the recent crisis has been strickly attibuted to a different component of the risk segments given 
by the liquidity risk. For instance, the crisis shows that the inability of financial institutions to acquire funding or cash at low costs was one of the main causes of the crisis.  This is the reason why 
the regulatory attention to the liquidity risk has increased during the years after crisis. 

On the other side, this would have increased the interest of the financial academy on incorporating the liquidity risk into risk measures. However, a review of the state of the art of the financial literature 
dealing with liquidity and risk measures shows that a few papers are written on the topic. These include, for example, Bangia \textit{et al.} (2008), Acerbi and Scandolo (2008) and Weber \textit{et al.} (2013). 
Bangia \textit{et al.} (2008) propose a liquidity adjusted VaR  measure built on  bid-ask spreads. Acerbi and Scandolo (2008) measure the liquidity risk by defining a coherent standard risk measure on the liquidity-adjusted value of the portfolio.  The value of the portfolio depends on the so called liquidity policy. An example of a liquidity policy is given by the minimum requirement on cash to be held in a portfolio composed of assets (including cash)  over a fixed investment horizon. Finally, Weber \textit{et al.} (2013) extend the approach by Acerbi and Scandolo (2008) by constructing a cash-invariant liquidity-adjusted risk measure.

In this paper we present a new framework for risk measures under the liquidity risk which we call illiquidity risk measures. The measurement framework will be mainly concerned with the risk of financial institutions' positions on financial securities, especially to those positions in which the financial institutions are long. The short selling will also be discussed for some particular cases. With securities we mean tradable assets. We will use the market-liquidity risk as our definition of the liquidity risk, that is the risk that a financial institution cannot easily offset a position without causing a significant movement in the price. 

Financial institutions are supposed to have at time $0$ a given amount  of a security $i$ or a portfolio composed of $n$ securities. The positions in securities and portfolios can be long or short. In the portfolio case, this means that we are considering portfolios composed of $n$ long or short positions. 
The illiquidity risk is captured by the future values of the offsetting price of the security $i$ or the offsetting prices of each single security composing the portfolio at time $T$. That is, the prices a financial institution gets when liquidating securities in which is long at time $0$. The illiquidity risk measures are then a real valued function of a real variable or of $n$ real variables being equal to a convex risk measure defined on the future offsetting prices of the securities. The prices depend on the traded volume of the securities, and are increasing and concave on it.  This way of modelling the prices is discussed for the stocks case by several authors in the theoretical financial literature, see e.g. \c{C}etin (2004), Allaj (2014), and Hausman \textit{et al.} (1992), Keim and Madhavan (1996) for what concerning the empirical financial literature. 

The illiquidity risk measures are viewed as a capital requirement, the capital required for making 
the one-unit positions held by financial institutions acceptable. The establishing objective is thus to compute for each position in a given security or portfolio the capital requirement needed to make that position acceptable. The illiquidity risk measures defined on long positions are increasing monotonic, cash sub-additive, and convex. The first property captures the fact that financial institutions with larger long positions are more risky. The second the greater sensitivity of the risk measures to an one-unit increase in the amount of the security or securities held by financial institutions with respect to a cash-additive standard risk measure. Lastly, the third one encourages financial institutions to brake up large trades into smaller trades. 

We provide a dual representation of the illiquidity risk measures defined on the space of the positive real numbers, or on the set of the positive $n$-tuples of real numbers for the portfolio case, by using a technique recently developed by El Karoui and Ravanelli (2008). That is, we introduce a new function which is increasing, translationally invariant, convex, and from this we obtain the desired Fenchel-Moureau dual representation. In particular situations, we also define and derive a dual representation for illiquidity risk measures fixed on sell positions. The risk measures in this case satisfy the opposite properties of the illiquidity risk measures defined on the long side. The dual representation is independent on the (probability) space where the offsetting prices live. Several examples of risk measures are presented including the classical VaR measure. 

Taking the cue from the illiquidity risk measures properties, we expand the previous framework to include the possibility that single financial institutions operate in the market by splitting their large trades into smaller ones. In presence of splitting trades, the offsetting price are not anymore required to be concave. We found that the illiquidity risk measures on long positions are decreasing monotonic, cash super-additive and convex,  reflecting the fact that the risk is reduced due to beneficial effects arising from the splitting the trades. Opposite results hold, in special cases, for illiquidity risk measures defined on short positions. Overall, the capital requirement is smaller with respect to the case whithout trade splitting, and dual representations results as before can be obtained.

The paper is organized as follows. Section (\ref{sec1}) introduces the one-period risk measurement model. Section (\ref{ill}) describes and derive the dual representation of illiquidity risk measures defined on a single security. Various examples are also given.  In Section (\ref{sec4}) illiquidity risk measures under trade splitting are discussed.  Section  (\ref{secmult}) presents the multivariate illiquidity risk measures by providing the respective dual representation and an example of a multivariate illiquidity risk measure. The case of trade splitting in presence of more than one security is given in Section (\ref{secnew}). Finally, Section (\ref{conc}) concludes. 
\section{Model}\label{sec1}
We assume a one period risk measurement model with two dates $0$ and $T$. At time zero, a given financial institution such as bank, insurance company, and others, has at its disposal a security or a portfolio consisting of different securities,  where for a security we mean a tradable asset. We suppose that the price of each security depends on the  size of the given security. 

Let $(\Omega,\mathcal{F})$ be a measurable space. The initial monetary value of the financial institution's position in the security $i$ will be denoted by $-yX^{i}_{0}(y)$, and the final monetary value at time $T$ by $yX^{i}_{T}(w, -y)$, where $y$ denotes the initial holdings of the financial institution in the security $i$, and $X^{i}_{0}(y)$, $X^{i}_{T}(w, -y)$ the price of the security $i$ at time $0$ and $T$.  We assume the securities in the market are finite, i.e. $i\in \mathbb{N}$. 

Here we suppose that positive values of $y$ indicate a long position, and negative values a short position in the security $i$.  Using the above convention, we can give an explicit meaning of the quantities $-yX^{i}_{0}(y)$ and $yX^{i}_{T}(w, -y)$.  At the beginning of the period, the financial institution starts with a position having a monetary value  equal to $-yX^{i}_{0}(y)$ depending on whether the financial institution is long or short on the security $i$. The position in the security $i$ provides a monetary value of $yX^{i}_{T}(w, -y)$ at time $T$. That is, the quantity $yX^{i}_{T}(w, -y)$  gives the random amount the financial institution receives for the sale of $y$ units of security $i$ held at time $0$ when $y>0$ and the random amount it pays for the purchase of $y$ units of security $i$ when $y<0$.  Put it differently, $yX^{i}_{T}(w, -y)$ represents the cash coming from making an opposite transaction at time $T$.  

In this work, our main interest is on those securities or portfolios which are held or owned by the given financial institution. Keeping this in mind, a risk measure applied to an one unit of a given security $i$ or a portfolio composed of one unit of $n$ securities, is viewed as a capital requirement which, if added to the initial position, makes it acceptable from the point of view of a regulatory agency. For some particular cases we will also discuss risk measures defined on those securities or portfolios in which  financial institutions have short positions. A risk measure defined on these short positions can be seen then as a capital guarantee  which insures the securities will be returned back to the financial counterparty.

The \textit{cash flow}  coming from a position $y>0$ in the security $i$ is given by
\begin{equation}\label{eq1}
y[X^{i}_{T}(w, -y)-X^{i}_{0}(y)]
\end{equation}
When Equation (\ref{eq1}) assumes a positive value, it means that the financial institution is receiving money from buying and selling security $i$, while when it is negative it is loosing money.  One can then easily notice that Equation (\ref{eq1}) measures the degree of liquidity of a financial institution, at a given time $T$, in the security $i$. It says how much money net of the initial investment a financial institution can raise up by liquidating security $i$. 
\begin{assumption}\label{ass1}
The price of the security $i$ is an increasing function of the quantity $y$ such that $y_1\geq y_2$ implies $X^{i}_{T}(w,y_1)\geq X^{i}_{T}(w,y_2)$ for each $y_1, y_2 \in \mathbb{R}\setminus\{0\}$. The price $X_{T}^{i}(w,0)=\tilde{X}_{T}^{i}(w)$ is known in the literature as the marginal unaffected price for an infinitesimal order size at time $T$ (see, for example, \c{C}etin (2004)) or as the price corresponding to an informationally efficient market with zero trading costs (see Allaj (2014)). It is supposed that $X^{i}_{T}(w,-y)\leq \tilde{X}_{T}^{i}(w) \leq X^{i}_{T}(w,y)$ for each $y\geq0$. 
\end{assumption}
Therefore, the risk in our model is related to the variability of the random variables $X^{i}_{T}(w, -y)$.  

In measuring the risk,  we are just assuming that the risk of the financial institution in the security $i$ is captured by the future value of the security's $i$ price, that is, the random price the financial institution gets when selling and buying the security $i$ at time $T$. This way of thinking was pioneered in a classic paper of Artzner \textit{et al.} (1999). 

We  have thus the following definition. 
\begin{definition}
The future values of a position $y\in\mathbb{R}_{>0}$ in a given security $i$ is described by a mapping $Z^{i}_{T,y}:\Omega \rightarrow \mathbb{R}$, where $Z^{i}_{T,y}=X^{i}_{T}(w, -y)$ for all $y\in\mathbb{R}_{>0}$.\footnote{From now on $Z^{i}_{T,y}(w)=Z^{i}_{y}(w)$.}
\end{definition}
We note that in the case the price $X$ is not a function of the order size $y$, then the cash flow of the security $i$ is the same for all $y\in\mathbb{R}_{>0}$, i.e. $y[X^{i}_{T}(w)-X^{i}_{0}]$. This means that also the risk of the security $i$ is the same for all $y\in\mathbb{R}_{>0}$. On the other side, the cash  and the risk of the security $i$ is different when $X$ depends on $y$, assuming different values for different $y$. 
\begin{assumption}\label{ass2}
It is assumed that $Z^{i}_y$ is concave in $y, v\in \mathbb{R}\setminus\{0\}$,  that is $Z^{i}_{\lambda y+(1-\lambda)v}\geq \lambda Z^{i}_y+(1-\lambda)Z^{i}_v$ for all $0 \leq \lambda \leq 1$.
\end{assumption}
The concavity of the price impact function is observed by different authors in the empirical financial literature dealing with stock markets. Almost all of these studies, see for e.g Hausman \textit{et al.} (1992), Keim and Madhavan (1996), conclude that the market impact  is a concave function of the traded volume. The  price  $X^{i}_{T}(w,y)$ is usually expressed as $X^{i}_{T}(w,y)=\tilde{X}^{i}_{T}(w) + h^{i}(y)$, where $\tilde{X}^{i}_{T}(w)$ is the unaffected price and $h^i$  increasing and concave function for $y>0$ and increasing and convex for $y<0$. In its simplest form, $h^i$ is just a linear form of $y$, that is $h^i=ay$ with $a>0$. 

Another common form assumed by the price impact function (see Almgren \textit{et al.} (2005), Gabaix \textit{et al.} (2007)) is given by the so called power law function of type $\gamma |y|^{\alpha}$ with $\alpha<1$, $\gamma>0$ where $\pm\gamma |y|^{\alpha}=+\gamma |y|^{\alpha}$ if $y>0$ and $\pm\gamma |y|^{\alpha}=-\gamma |y|^{\alpha}$ if $y<0$. On contrary, Blais \textit{et al.} (2010) show by using data on stocks traded on the New York Stock Exchange that the form of the price impact is a linear one. The example of the power law function is quite different from what we assume in (\ref{ass2}). However, as we will see further on in this article, this assumption is innocuous when financial insitutions try to break their large orders into smaller packages, and this happens quite often in the reality.  

By Assumption (\ref{ass2}) one then easily note that $Z^{i}_y$ is decreasing and concave in $y$.    
\section{Illiquidity risk measures}\label{ill}
In this section our aim is to quantify the risk of $Z^{i}_{y}(w)$ for a fixed value of $y\in\mathbb{R}_{>0}$  and security $i$ by a risk measure function.  We call such a risk measure an illiquidity risk measure, which we define by explicitly accounting  the financial institution's holding in the security $i$. We denote it by $\beta^{i}: \mathbb{R}_{>0}\rightarrow \mathbb{R}$.
 
Thus we simply say that $y$ has some influence on the price of the security $i$ and compute for each $y\in\mathbb{R}_{>0}$ the financial institution's risk in the security $i$. 
\subsection{Illiquidity risk measures definition}\label{3.1}
Given a position $y$ in the security $i$ and a convex risk measure $\rho$ on the space $\mathcal{Z}_{i}$, the risk measure $\beta^{i}$ of the position $y$ will be defined as  being equal to $\rho(Z^{i}_{y})$. This way of defining the illiquidity risk measure seems quite natural since the risk of $y$ is related to the risk of the random variable $Z^{i}_{y}(w)$. This observation leads naturally to the following definition. 
\begin{definition}\label{def2}
An illiquidity risk measure on the space $\mathbb{R}_{>0}$ is a function $\beta^{i}: \mathbb{R}_{>0}\rightarrow \mathbb{R}$ defined by $\beta^{i}(y)\stackrel{\text{def}}{=}\rho(Z^{i}_{y})$ for $y\in\mathbb{R}_{>0}$. 

The functional $\rho: \mathcal{Z}_{i} \rightarrow \mathbb{R}$ is a standard convex monetary risk measure functional, i.e.  for all $i \in\mathbb{N}$, and $V, U \in \mathcal{Z}_{i}$, it satisfies the following axioms\footnote{See Follmer and Schied (2004) and Delbaen (2002) for a risk measure functional definition.}
\begin{itemize}
\item [a)] Decreasing monotonicity: $V(w) \leq U(w)$, then $\rho(V)\geq \rho(U)$;  
\item[b)] Cash invariance (or cash-additivity): $\forall m\in\mathbb{R}$, $\rho(V+m)=\rho(V)-m$. 
\end{itemize}
A risk measure $\rho$ is called convex if 
\begin{itemize}
\item[c)] Convexity: $\rho(\lambda V+(1-\lambda)U)\leq \lambda\rho(V) + (1-\lambda)\rho(U)$, $0 \leq\lambda \leq 1$.
\end{itemize}
\end{definition}
Our first goal is to show that Definition (\ref{def2}) together with Assumptions (\ref{ass1}) and (\ref{ass2}) imply that $\beta^{i}$ is increasing, cash sub-additive, and a convex illiquidity risk measure.
\begin{proposition}\label{prop1}
Denote an illiquidity risk measure by the mapping $\beta^{i}: \mathbb{R}_{>0}\rightarrow \mathbb{R}$. Then, $\beta^{i}(y)=\rho(Z^{i}_{y})$ is increasing, cash-sub additive, and convex  for $y\in\mathbb{R}_{>0}$, that is, it satisfies the followings
\begin{itemize}
\item [a)] Increasing monotonicity:  $\forall y \geq v \in\mathbb{R}_{>0}$,  then $\beta^{i}(y) \geq \beta^{i}(v)$;
\item[b)] Cash sub-additivity (or translationally super-variance): $\forall m \geq 0$ such that $y\in\mathbb{R}_{>0}$, then $\beta^{i}(y+m)\geq \beta^{i}(y)-m$;
\item[c)] Convexity: $\forall y, v \in \mathbb{R}_{>0}$, then $\beta^{i}(\lambda y+ (1-\lambda)v) \leq \lambda\beta^{i}(y) + (1-\lambda)\beta^{i}(v)$, $0 \leq\lambda \leq 1$.
\end{itemize}
\end{proposition}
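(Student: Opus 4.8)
The plan is to derive each of the three properties of $\beta^i$ directly from the defining identity $\beta^i(y)=\rho(Z^i_y)$, by translating the structural facts about the family $\{Z^i_y\}_{y>0}$ recorded in Assumptions \ref{ass1} and \ref{ass2} through the axioms of the convex monetary risk measure $\rho$ in Definition \ref{def2}. The single recurring mechanism is that $\rho$ is \emph{decreasing} monotone, so every pointwise inequality between offsetting-price random variables flips sign when $\rho$ is applied; this is exactly what turns a concave price family into a convex risk measure.

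For (a), I would note that if $y\geq v>0$ then $-y\leq -v$, so Assumption \ref{ass1} (monotonicity of the price in the order size) gives $Z^i_y=X^i_T(\cdot,-y)\leq X^i_T(\cdot,-v)=Z^i_v$ pointwise on $\Omega$; decreasing monotonicity of $\rho$ then yields $\beta^i(y)=\rho(Z^i_y)\geq\rho(Z^i_v)=\beta^i(v)$. Part (b) is then immediate from (a): for $m\geq 0$ we have $y+m\geq y$, hence $\beta^i(y+m)\geq\beta^i(y)\geq\beta^i(y)-m$. Here it is worth emphasising the contrast with the cash-invariance axiom of $\rho$: the exact relation $\rho(V+m)=\rho(V)-m$ degrades to a one-sided inequality for $\beta^i$, precisely because increasing the \emph{position} in the security (rather than adding cash) moves the offsetting price monotonically downward.

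For (c), I would fix $y,v\in\mathbb{R}_{>0}$ and $\lambda\in[0,1]$, so that $\lambda y+(1-\lambda)v\in\mathbb{R}_{>0}$ still lies in the domain. Assumption \ref{ass2} (concavity of $y\mapsto Z^i_y$) gives the pointwise bound $Z^i_{\lambda y+(1-\lambda)v}\geq\lambda Z^i_y+(1-\lambda)Z^i_v$; applying decreasing monotonicity of $\rho$ turns this into $\beta^i(\lambda y+(1-\lambda)v)\leq\rho\bigl(\lambda Z^i_y+(1-\lambda)Z^i_v\bigr)$, and the convexity axiom of $\rho$ then bounds the right-hand side by $\lambda\rho(Z^i_y)+(1-\lambda)\rho(Z^i_v)=\lambda\beta^i(y)+(1-\lambda)\beta^i(v)$, which is the claim.

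I do not expect a genuine obstacle: the argument is essentially bookkeeping. The two points that need care are keeping track of the sign reversal produced by the decreasing monotonicity of $\rho$ (it is what makes the concavity of the price family yield convexity rather than concavity of $\beta^i$), and checking that the random variables produced along the way --- $\lambda Z^i_y+(1-\lambda)Z^i_v$ and $Z^i_{y+m}$ --- remain in the domain $\mathcal{Z}_i$ on which $\rho$ is defined, which I would take to follow from the standing assumption that $\mathcal{Z}_i$ is a suitable convex space of random variables containing the offsetting prices.
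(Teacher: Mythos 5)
Your proposal is correct and follows essentially the same route as the paper: part (a) from Assumption \ref{ass1} plus decreasing monotonicity of $\rho$, part (b) from (a) together with $m\geq 0$ (the paper phrases the trivial bound $\beta^i(y)\geq\beta^i(y)-m$ via cash-invariance of $\rho$, which amounts to the same thing), and part (c) from Assumption \ref{ass2}, the sign flip under $\rho$, and convexity of $\rho$. No gaps.
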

\begin{proof}
\begin{itemize}
\item[a)] Let $y, v\in \mathbb{R}_{>0}$. From Assumption (\ref{ass1}), $y \geq v$ imply that $X_{T}^{i}(w, -y) \leq X_{T}^{i}(w, -v)$ so that $Z^{i}_{y}(w) \leq Z^{i}_{v}(w)$.  By  Definition (\ref{def2}), it follows that $\beta^{i}(y)=\rho(Z^{i}_{y}) \geq \rho(Z^{i}_{v})=\beta^{i}(v)$.
\item[b)] For $y\in \mathbb{R}_{>0}$, $m \geq0$  it is easily verified  that $\beta^{i}(y+m)\geq \beta^{i}(y)=\rho(Z^{i}_{y})\geq \rho(Z^{i}_{y} + m)= \rho(Z^{i}_{y}) - m=\beta^{i}(y)-m$ by point (a), positivity of $m$, and Definition (\ref{def2}).
\item[c)] Let $y, v\in \mathbb{R}_{>0}$ and note that $Z^{i}_{\lambda y +(1-\lambda)v}\geq \lambda Z^{i}_y+(1-\lambda)Z^{i}_{v}$ from Assumption (\ref{ass2}). This from  Definition (\ref{def2})  implies that $\beta^{i}(\lambda y+(1-\lambda)v)=\rho(Z^{i}_{\lambda y+(1-\lambda)v})\leq\rho(\lambda Z^{i}_{y}+(1-\lambda)Z^{i}_{v}) \leq \lambda\rho(Z^{i}_{y}) + (1-\lambda)\rho(Z^{1}_{v})=\lambda\beta^{i}(y)+(1-\lambda)\beta^{i}(v)$.
\end{itemize}
\end{proof}
\begin{remark}\label{rm1}
We observe that $y>0$ corresponds to the case in which the financial institution borrows $y$ units of the security $i$  at time $0$ and sell them at time $T$. Axiom (a) then says that if the financial institution increases the long position in the security $i$ then its illiquidity risk measure $\beta^{i}$ should increase too, since the financial institution becomes more risky, and less liquid. The meaning of Axiom (b) is "when the financial institution buys more than $y$ units of the security $i$, exactly  $y+m$ units, the illiquidity risk cannot be reduced by less  than $m$." Suppose $m=1$. Then, the Axiom (b) reads $\beta^{i}(y+1)-\beta^{i}(y)\geq -1$. This means that the illiquidity risk measure increases by greater or equal than $-1$ as the position in security $i$ changes from $y$ to $y+1$.  That is, the financial institution's money worth more in an illiquid market.  The last axiom illustrates the fact that the increase in the risk of a security $i$ generated by an one unit increase in $y>0$ in the security $i$ is smaller when $y$ is small than when it is large. From a practical point of view, this axiom would encourages a financial institution to brake up a large trade into several smaller ones. 
\end{remark}  
\subsection{Dual representation of illiquidity risk measures}\label{secnew}
In this subsection, we suppose the random variables $Z^{i}_{y}$ for all $y\in\mathbb{R}$ belong to the space $\mathcal{Z}_{i}$ of all bounded measurable function defined on the measurable space $(\Omega, \mathcal{F})$. Recall that equipping the space $\mathcal{Z}_{i}$ with the supremum norm $||Z^{i}_{y}||=\sup_{\omega \in \Omega}|Z^{i}_{y}(w)|$,  the convex risk measure is Lipschitz with respect to this norm.

Our aim is to give a dual representation for the illiquidity risk measure $\beta^{i}$. 

As shown in the Subsection (\ref{3.1}), the main axioms of the illiquidity risk measure are convexity, cash sub-additivity and increasing monotonicity. The cash-additivity axiom is an important difference between a standard risk measure and the one proposed here. In order to make use of some main results in the convex analysis, we will work for the rest of this section with a new translationally invariant functions containing as a special case our risk measure function. 

To deal with this, we introduce a new function defined in a similar fashion as in  (El Karoui and Ravanelli (2008)). At first, we define the following function
\begin{equation}\label{eq6}
 f^{i}(y) = \left\{ 
   \begin{array}{l l}
     \beta^{i}(y) & \quad \text{if $y\geq0$ }\\
     \rho(Z^{i}_{y})& \quad \text{if $y\leq0$}
   \end{array} \right.\
\end{equation}
By convention, we put $\beta^{i}(0)=\rho(\tilde{X}_{T}^{i}(w))=\rho(Z^{i}_{0})$, whereas $\beta^{i}(0)\leq\rho(Z^{i}_{y})=\beta^{i}(y)$ for each $y\geq 0$, $\beta^{i}(0)\geq\rho(Z^{i}_{y})=\beta^{i}(y)$ for each $y\leq 0$, where $\tilde{X}_{T}^{i}(w)$ gives the unaffected price.  From now on we will assume that $Z^{i}_y$ is concave for all $y\in\mathbb{R}$. One  can then easily verify that $f^{i}(y)$ satisfies Proposition (\ref{prop1}) for every $y\in\mathbb{R}$. 

We then let $\hat{\beta}^{i}$ be the function defined as $\hat{\beta}^{i}(h, x)\stackrel{\text{def}}{=}f^{i}((y+x)-x)+x$, with $h=y+x$, $x\in\mathbb{R}$ and $y\in\mathbb{R}$.  

The following proposition shows that $\hat{\beta}^{i}(h, x)$ satisfies the increasing monotonicity, translationally invariance, and the convexity property. 
\begin{proposition}\label{prop3}
The function $\hat{\beta}^{i}(h, x)$ defined as $f^{i}((y+x)-x)+x$ for every $(h,x)\in \mathbb{R}^{2}$ is  increasing monotonic, translationally invariant, and convex. 
\end{proposition}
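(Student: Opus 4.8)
The plan is to reduce everything to the elementary closed form of $\hat{\beta}^{i}$. Writing $h=y+x$, so that $y=h-x$, the definition reads
\[
\hat{\beta}^{i}(h,x)=f^{i}(h-x)+x,
\]
and the three asserted properties will follow from the corresponding properties of $f^{i}$. Recall that, once $Z^{i}_{y}$ is assumed concave for every $y\in\mathbb{R}$, the text has already observed that $f^{i}$ satisfies all of Proposition \ref{prop1} on the whole real line; that is, $f^{i}$ is increasing, cash sub-additive ($f^{i}(y+m)\geq f^{i}(y)-m$ for $m\geq0$), and convex on $\mathbb{R}$. I would open the proof by recording this fact explicitly, checking in particular that increasing monotonicity and cash sub-additivity persist for negative arguments, which is exactly where Assumption \ref{ass1} (read for arguments of either sign) and the concavity of $Z^{i}_{y}$ enter.

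Granting this, the three verifications are each one line. For increasing monotonicity I would fix $x$ and take $h_{1}\geq h_{2}$; then $h_{1}-x\geq h_{2}-x$, so monotonicity of $f^{i}$ gives $f^{i}(h_{1}-x)\geq f^{i}(h_{2}-x)$ and hence $\hat{\beta}^{i}(h_{1},x)\geq\hat{\beta}^{i}(h_{2},x)$. For translational invariance I would compute, for any $m\in\mathbb{R}$,
\[
\hat{\beta}^{i}(h+m,x+m)=f^{i}\big((h+m)-(x+m)\big)+(x+m)=f^{i}(h-x)+x+m=\hat{\beta}^{i}(h,x)+m,
\]
so that translating the pair $(h,x)$ along the diagonal by $m$ translates the value by exactly $m$: this is the cash invariance in the sense of El Karoui and Ravanelli (2008), and it is precisely the auxiliary variable $x$ that absorbs the failure of cash-additivity of $f^{i}$. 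For convexity I would observe that $(h,x)\mapsto h-x$ is affine, so $(h,x)\mapsto f^{i}(h-x)$ is convex (a convex function composed with an affine map), and adding the linear term $x$ preserves convexity; explicitly, applying convexity of $f^{i}$ to the points $h_{1}-x_{1}$ and $h_{2}-x_{2}$ and then adding $\lambda x_{1}+(1-\lambda)x_{2}$ to both sides yields $\hat{\beta}^{i}\big(\lambda(h_{1},x_{1})+(1-\lambda)(h_{2},x_{2})\big)\leq\lambda\hat{\beta}^{i}(h_{1},x_{1})+(1-\lambda)\hat{\beta}^{i}(h_{2},x_{2})$.

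I do not expect a genuine obstacle: the argument is a change of variables followed by three short checks. The two points worth care are (i) the preliminary claim that $f^{i}$ inherits all of Proposition \ref{prop1} on $\mathbb{R}$, which should be argued rather than merely quoted, and (ii) the intended meaning of ``increasing monotonic'': it is monotonicity in $h$ with $x$ held fixed, not joint monotonicity in $(h,x)$ --- indeed a linear $f^{i}(y)=cy$ with $c>1$ is increasing, cash sub-additive and convex yet makes $\hat{\beta}^{i}(h,x)=ch+(1-c)x$ strictly decreasing in $x$ --- so I would phrase the monotonicity claim in that coordinate.
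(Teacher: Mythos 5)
Your proof is correct and follows essentially the same route as the paper: reduce to the properties of $f^{i}$ on all of $\mathbb{R}$ (which the paper simply asserts after extending concavity of $Z^{i}_{y}$ to all $y$), then verify the three axioms by direct computation, with translational invariance and convexity argued exactly as in the text. The one place where you and the paper diverge is the monotonicity axiom. The paper proves a slightly stronger statement than your fixed-$x$ version: it takes $y\geq u$ \emph{and} $x_{1}\geq x_{2}$ and shows $\hat{\beta}^{i}(y+x_{1},x_{1})=f^{i}(y)+x_{1}\geq f^{i}(u)+x_{2}=\hat{\beta}^{i}(u+x_{2},x_{2})$, i.e.\ monotonicity along simultaneous increases of the pair $(y,x)$; your version is the special case $x_{1}=x_{2}$, and the stronger version follows from your argument by simply adding $x_{1}\geq x_{2}$ at the end. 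Your closing caveat is well taken and, in fact, more pointed than you may realize: the counterexample $f^{i}(y)=cy$, $c>1$, giving $\hat{\beta}^{i}(h,x)=ch+(1-c)x$, shows that $\hat{\beta}^{i}$ need not be increasing with respect to the componentwise order on $(h,x)$, and it is exactly that componentwise monotonicity (from $(h,x_{1})\leq(v+\sqrt{2}\lVert\mathbf{h}-\mathbf{v}\rVert,\,x_{2}+\sqrt{2}\lVert\mathbf{h}-\mathbf{v}\rVert)$) which the paper invokes later in the proof of Lemma \ref{lemm1}; so your remark identifies a real ambiguity in how ``increasing monotonic'' must be read, even though it does not affect the correctness of Proposition \ref{prop3} itself as proved by either you or the paper.
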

\begin{proof}
\begin{itemize}
\item[a)] Increasing monotonicity: Let $y\geq u$, $x_1\geq x_2$ and $h=y+x_1$, $v=u+x_2$ such that $y, v \in \mathbb{R}$ and $x_1, x_2\in\mathbb{R}$. From the increasing monotonicity of $f^{i}$, it follows that $\hat{\beta}^{i}(h, x_1)=f^{i}((y+x_1)-x_1)+x_1=f^{i}(y)+x_1\geq f^{i}((u+x_2)-x_2)+x_2=f^{i}(u)+x_2=\hat{\beta}^{i}(v, x_2)$; 
\item[ b)] Translationally invariance: Assume $m\in \mathbb{R}$, $x\in\mathbb{R}$, and $y\in\mathbb{R}$. Then $\hat{\beta}^{i}(h+m, x+m)=f^{i}[(h+m) - (x+m)]+(x+m)=[f^{i}(h-x)+x]+m=[f^{i}((y+x)-x)+x]+m=\hat{\beta}^{i}(h,x)+m$;
\item[ c)] Convexity: Let $0 \leq\lambda \leq 1$, $y, u\in\mathbb{R}$, $x_1, x_2\in\mathbb{R}$, and $h=y+x_1$, $v=u+x_2$. By definition, $f^{i}\left[\lambda(h-x_1)+(1-\lambda)(v-x_2)\right]+\lambda x_1+(1-\lambda)x_2\leq\lambda f^{i}((y+x_1)-x_1)+(1-\lambda)f^{i}((v+x_2)-x_2)+\lambda x_1+(1-\lambda)x_2$. Convexity of $f^{i}$  implies then that $\hat{\beta}^{i}[\lambda(h,x_1) + (1-\lambda)(v,x_2)]\leq \lambda\hat{\beta}^{i}(h,x_1) + (1-\lambda)\hat{\beta}^{i}(v, x_2)$. This completes the proof.
\end{itemize}
\end{proof} 
The lemma below shows that the function $\hat{\beta}^{i}(h, x)$ is Lipschitz continuous (with constant $\sqrt{2}$) with respect to the norm $||\cdot||$ on $\mathbb{R}^{2}$.  
\begin{lemma}\label{lemm1}
The function $\hat{\beta}^{i}(\hat{y})$ is Lipschitz continuous with respect to the norm $||\cdot||$ on $\mathbb{R}^{2}$, i.e.
\begin{equation}\label{eq3}
|\hat{\beta}^{i}(\textbf{h})-\hat{\beta}^{i}(\textbf{v})| \leq \sqrt{2}||\textbf{h}-\textbf{v}||
\end{equation}
\end{lemma}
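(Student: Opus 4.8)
The plan is to work directly from the defining formula $\hat{\beta}^{i}(h,x)=f^{i}(h-x)+x$ and from the properties already recorded for $\hat{\beta}^{i}$ in Proposition (\ref{prop3}) — monotonicity in the first argument, translational invariance, convexity. The only extra ingredient needed is that $f^{i}$ is \emph{non-expansive} on $\mathbb{R}$, i.e. $|f^{i}(a)-f^{i}(b)|\le|a-b|$. This I would deduce from the recalled fact that $\rho$ is Lipschitz with constant one for the supremum norm, via $|f^{i}(a)-f^{i}(b)|=|\rho(Z^{i}_{a})-\rho(Z^{i}_{b})|\le||Z^{i}_{a}-Z^{i}_{b}||$, together with the requirement that the price-impact map $y\mapsto X^{i}_{T}(\cdot,-y)$ vary no faster than linearly in the order size, uniformly in $\omega$. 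This is a mild condition — it holds, for instance, for the linear impact $h^{i}=ay$ with $a\le 1$ — but it is not literally implied by Assumptions (\ref{ass1})--(\ref{ass2}), so I would add it to the hypotheses; I regard securing the non-expansiveness of $f^{i}$ as the genuine content of the lemma, everything after it being routine bookkeeping.

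Granting this, write $\mathbf{h}=(h,x_{1})$ and $\mathbf{v}=(v,x_{2})$ and pass from $\mathbf{v}$ to $\mathbf{h}$ one coordinate at a time. Along the first coordinate, $\hat{\beta}^{i}(h,x_{2})-\hat{\beta}^{i}(v,x_{2})=f^{i}(h-x_{2})-f^{i}(v-x_{2})$ has modulus at most $|h-v|$ by non-expansiveness. Along the second, translational invariance gives, with $m=x_{1}-x_{2}$, the identity $\hat{\beta}^{i}(h,x_{1})=\hat{\beta}^{i}(h-m,x_{2})+m$, whence $\hat{\beta}^{i}(h,x_{1})-\hat{\beta}^{i}(h,x_{2})=m-\big(f^{i}(h-x_{2})-f^{i}(h-x_{2}-m)\big)$; the bracketed quantity has the sign of $m$ (monotonicity of $f^{i}$) and modulus at most $|m|$ (non-expansiveness), so the difference lies between $0$ and $m$ and has modulus at most $|x_{1}-x_{2}|$. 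Combining the two estimates by the triangle inequality gives $|\hat{\beta}^{i}(\mathbf{h})-\hat{\beta}^{i}(\mathbf{v})|\le|h-v|+|x_{1}-x_{2}|$, and then Cauchy--Schwarz yields $|h-v|+|x_{1}-x_{2}|\le\sqrt{2}\,\big((h-v)^{2}+(x_{1}-x_{2})^{2}\big)^{1/2}=\sqrt{2}\,||\mathbf{h}-\mathbf{v}||$, which is (\ref{eq3}) with $||\cdot||$ the Euclidean norm — the reading forced by the stated constant.

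A slightly slicker variant avoids the two-term split: non-expansiveness and monotonicity of $f^{i}$ together upgrade Proposition (\ref{prop3})(a) to honest coordinatewise monotonicity of $\hat{\beta}^{i}$ on $\mathbb{R}^{2}$ (the $x$-increment is nonnegative because the $+x$ term dominates the decrease of $f^{i}$), and a coordinatewise increasing, translation-invariant function on $\mathbb{R}^{2}$ is automatically $1$-Lipschitz for the sup norm by the squeeze $\mathbf{v}-\delta\mathbf{1}\le\mathbf{h}\le\mathbf{v}+\delta\mathbf{1}$ with $\delta=||\mathbf{h}-\mathbf{v}||_{\infty}$, which implies (\ref{eq3}) a fortiori. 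Either way, the step requiring the most care is precisely the transfer of Lipschitz continuity from $\rho$ on $\mathcal{Z}_{i}$ to $f^{i}$ on $\mathbb{R}$: without a uniform Lipschitz bound on the price impact, the constant in (\ref{eq3}) must be scaled by the Lipschitz constant of $h^{i}$, and for a genuinely non-Lipschitz impact such as the power law $\gamma|y|^{\alpha}$ with $\alpha<1$ the estimate breaks down near $y=0$.
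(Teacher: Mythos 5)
Your argument follows essentially the same route as the paper's proof: a componentwise estimate, an appeal to monotonicity and translational invariance of $\hat{\beta}^{i}$, and then Cauchy--Schwarz to convert an $\ell^{1}$ bound into $\sqrt{2}$ times the Euclidean norm. The substantive difference is that you have isolated, and correctly diagnosed, the one step the paper leaves unjustified. The paper passes from the componentwise inequality $(h,x_1)\le(v+c,x_2+c)$ to $\hat{\beta}^{i}(h,x_1)\le\hat{\beta}^{i}(v+c,x_2+c)$ by citing ``increasing monotonicity'', but the monotonicity actually proved in Proposition (\ref{prop3})(a) is with respect to the order $y\ge u$, $x_1\ge x_2$ on the pairs $(y,x)$, not the coordinatewise order on $(h,x)$; monotonicity of $\hat{\beta}^{i}(h,x)=f^{i}(h-x)+x$ in $x$ for fixed $h$ is precisely the statement that the increase of $f^{i}$ over an interval of length $m$ never exceeds $m$, which is your non-expansiveness condition. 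That condition does not follow from Assumptions (\ref{ass1})--(\ref{ass2}) nor from the $1$-Lipschitz continuity of $\rho$ on $(\mathcal{Z}_{i},\|\cdot\|)$ alone, since nothing bounds $\sup_{\omega}|X^{i}_{T}(\omega,-a)-X^{i}_{T}(\omega,-b)|$ by $|a-b|$. Indeed, for the paper's own linear Example (\ref{exlin}) one gets $\hat{\beta}^{i}(h,x)=\rho(\tilde{X}^{i}_{T})+ah-(a-1)x$, whose Euclidean Lipschitz constant is $\sqrt{a^{2}+(a-1)^{2}}$ and exceeds $\sqrt{2}$ once $a>(1+\sqrt{3})/2$, so the lemma is false as stated. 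Your added hypothesis --- a uniform-in-$\omega$ Lipschitz bound on $y\mapsto X^{i}_{T}(\omega,-y)$ with constant at most $1$ --- is the minimal repair, and with it both your coordinate-splitting argument and your sup-norm variant are correct; in the general case the constant in (\ref{eq3}) must be scaled by the Lipschitz constant of the price impact, exactly as you observe.
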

\begin{proof}
If $\textbf{h}=(h,x_1)$, $\textbf{v}=(v,x_2)$, $h=y+x_1$ and $v=u+x_2$ then we have
\begin{equation*}
\begin{pmatrix}
h \\
x_1 \\
\end{pmatrix}\leq
\begin{pmatrix}
v \\
x_2 \\
\end{pmatrix}
+\begin{pmatrix}
|h-v| \\
|x_1-x_2| \\
\end{pmatrix}
\end{equation*}
By Cauchy's inequality
\begin{equation*}
\begin{pmatrix}
h \\
x_1 \\
\end{pmatrix}\leq\begin{pmatrix}
v \\
x_2 \\\end{pmatrix}+\begin{pmatrix}
\sqrt{2}||\textbf{h}-\textbf{v}|| \\
\sqrt{2}||\textbf{h}-\textbf{v}|| \\
\end{pmatrix}
\end{equation*}
We have thus that
\begin{eqnarray*}
&&\hat{\beta}^{i}(h,x_1)=f^{i}((y+x_1)-x_1)+x_1\leq \hat{\beta}^{i}(\textbf{v}+\sqrt{2}||\textbf{h}-\textbf{v}||)\nonumber\\&&=\hat{\beta}^{i}(v+\sqrt{2}||\textbf{h}-\textbf{v}||, x_2+\sqrt{2}||\textbf{h}-\textbf{v}||)\nonumber\\&&= f^{i}((u+x_2)-x_2)+x_2+\sqrt{2}||\textbf{h}-\textbf{v}||\nonumber\\&&=\hat{\beta}^{i}(v,x_2)+\sqrt{2}||\textbf{h}-\textbf{v}||
\end{eqnarray*} 
by increasing monotonicity and translationally invariance, or differently
\begin{eqnarray*}
\hat{\beta}^{i}(h,x_1)-\hat{\beta}^{i}(v,x_2)\leq\sqrt{2}||\textbf{h}-\textbf{v}||
\end{eqnarray*}
Reversing $\textbf{h}$ and $\textbf{v}$ yields the lemma
\begin{equation}
|\hat{\beta}^{i}(\textbf{h})-\hat{\beta}^{i}(\textbf{v})|\leq \sqrt{2}||\textbf{h}-\textbf{v}||
\end{equation}
\end{proof}
As an immediate consequence  of Lemma (\ref{lemm1}), we have that $\hat{\beta}^{i}(h,x)$ is a lower semi continuous function on $\mathbb{R}^{2}$. Even more, it is  proper and convex.  We have already proved the convexity and the lower semi continuity property. As for the remaining property, it is clear that $\hat{\beta}^{i}(h,x)$ is proper as a sum of $x$ and a proper convex function $f^{i}((y+x)-x)$ defined as  $f^{i}(y)=\beta^{i}(y)$ for $y\geq0$ and $f^{i}(y)=\rho(Z^{i}_y)$ for $y\leq0$. 

In the light of the Fenchel-Moreau theorem, see Rockafellar (1970), Ekeland and T\`emam (1999) and  Borwein and Lewis (2006) for the multivariate version of the Fenchel-Moreau theorem, the second conjugate of the function $\hat{\beta}^{i}(h,x)$ coincides with the function itself, that is
\begin{equation}\label{eq5*}
\hat{\beta}^{{i}^{**}}(\textbf{h})=\hat{\beta}^{i}(\textbf{h})
\end{equation}
where 
\begin{equation*}
\hat{\beta}^{i}(\textbf{h})=\hat{\beta}^{{i}^{**}}(\textbf{h})=\sup_{\textbf{v}\in\mathbb{R}^{2}}\{\textbf{h}^{T}\textbf{v}-\hat{\beta}^{{i}^{*}}(\textbf{v})\}
\end{equation*}
and
\begin{equation*}
\hat{\beta}^{{i}^{*}}(\textbf{v})=\sup_{\textbf{h}\in\mathbb{R}^{2}}\{\textbf{v}^{T}\textbf{h}-\hat{\beta}^{i}(\textbf{h})\}
\end{equation*}
It is easily seen that the function $f$ can be derived by setting $x=0$, so that from Equation (\ref{eq5*}) $\hat{\beta}^{i}(\textbf{h})$ can be treated as a function of one variable.  It follows that
\begin{equation*}
\hat{\beta}^{i}((y,0))=f^{i}(y)=\sup_{u\in\mathbb{R}}\{yu-f^{{i}^{*}}(u)\}
\end{equation*}
where the conjugate of $f^{i}(u)$ has the following expression
\begin{equation*}
f^{{i}^{*}}(u) =\hat{\beta}^{{i}^{*}}((u,0))=\sup_{y\in\mathbb{R}}\{uy-f^{i}(y)\} 
\end{equation*}
Now, restricting $y$  to $\mathbb{R}_{>0}$ we are able to provide the illiquidity risk measure $\beta^{i}$ with a dual representation  of the form
\begin{equation*}
\beta^{i}(y)=f^{i}(y)=\sup_{u\in\mathbb{R}}\{yu-f^{{i}^{*}}(u)\} \quad \forall y\in\mathbb{R}_{>0}
\end{equation*}
We have thus proved the following theorem.
\begin{theorem}\label{thm1}
Any illiquidity risk measure on $\mathbb{R}_{>0}$ defined as $\beta^{i}(y)=\rho(Z^{i}_y)$, with $\rho$ a convex risk measure defined on the linear space $\mathcal{Z}^{i}$ of bounded random variables and $Z^{i}_y$ decreasing and concave in $y$, can be represented as follows
\begin{equation}\label{ex4}
\beta^{i}(y)=\sup_{u\in\mathbb{R}}\{yu-f^{{i}^{*}}(u)\}
\end{equation}
with conjugate function $f^{{i}^{*}}$ given as follows
\begin{equation*}
\sup_{y\in\mathbb{R}}\{uy-f^{i}(y)\} 
\end{equation*}
and $f$ as in Equation (\ref{eq6}).
\end{theorem}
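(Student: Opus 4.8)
The plan is to reduce the claim to the Fenchel--Moreau biconjugation theorem applied to the globally defined extension $f^{i}$ of $\beta^{i}$, following the route set up in the paragraphs preceding the theorem. First I would observe that the extension $f^{i}$ from Equation (\ref{eq6}) still satisfies, now on all of $\mathbb{R}$, the three properties of Proposition (\ref{prop1}): increasing monotonicity, cash sub-additivity, and convexity. This uses Assumption (\ref{ass2}) in the strengthened form that $Z^{i}_{y}$ is concave (hence decreasing and concave) for \emph{every} $y\in\mathbb{R}$, which is precisely the running hypothesis of the theorem. In particular $f^{i}$ is a finite convex function on the whole line, hence continuous, a fortiori lower semicontinuous and proper.

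Second, I would pass to the two-variable lift $\hat{\beta}^{i}(h,x)=f^{i}(h-x)+x$. Proposition (\ref{prop3}) supplies increasing monotonicity, translational invariance, and convexity of $\hat{\beta}^{i}$, while Lemma (\ref{lemm1}) supplies $\sqrt{2}$-Lipschitz continuity on $\mathbb{R}^{2}$; together these make $\hat{\beta}^{i}$ a proper, closed, convex function on $\mathbb{R}^{2}$. The multivariate Fenchel--Moreau theorem (Rockafellar (1970); Ekeland and T\`emam (1999); Borwein and Lewis (2006)) then yields $\hat{\beta}^{i^{**}}=\hat{\beta}^{i}$, i.e.
\begin{equation*}
\hat{\beta}^{i}(\textbf{h})=\sup_{\textbf{v}\in\mathbb{R}^{2}}\{\textbf{h}^{T}\textbf{v}-\hat{\beta}^{i^{*}}(\textbf{v})\},\qquad \hat{\beta}^{i^{*}}(\textbf{v})=\sup_{\textbf{h}\in\mathbb{R}^{2}}\{\textbf{v}^{T}\textbf{h}-\hat{\beta}^{i}(\textbf{h})\}.
\end{equation*}

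Third, I would collapse this to one variable by taking $\textbf{h}=(y,0)$, so that the left-hand side is $\hat{\beta}^{i}((y,0))=f^{i}(y)$ (put $x=0$ in the definition of the lift, which is legitimate since $\hat{\beta}^{i}(h,0)=f^{i}(h)$ by construction). Because $\hat{\beta}^{i}$ is translationally invariant along the diagonal $(1,1)$, its conjugate $\hat{\beta}^{i^{*}}$ equals $+\infty$ outside the affine line $\{v_{1}+v_{2}=1\}$, so the supremum in the biconjugate is effectively over $\textbf{v}=(u,1-u)$; substituting $y'=h-x$ inside the definition of $\hat{\beta}^{i^{*}}$ gives $\hat{\beta}^{i^{*}}((u,1-u))=\sup_{y'\in\mathbb{R}}\{uy'-f^{i}(y')\}=:f^{i^{*}}(u)$. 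Hence $f^{i}(y)=\sup_{u\in\mathbb{R}}\{yu-f^{i^{*}}(u)\}$ for all $y\in\mathbb{R}$, and restricting to $y\in\mathbb{R}_{>0}$, where $f^{i}(y)=\beta^{i}(y)$, yields Equation (\ref{ex4}) with conjugate $f^{i^{*}}$ as stated.

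The step I expect to require the most care is this last reduction: one has to check that the scalar conjugate in the theorem is the restriction of the planar conjugate $\hat{\beta}^{i^{*}}$ to the line $v_{1}+v_{2}=1$, not to the axis $v_{2}=0$, and to track the substitution $y'=h-x$ correctly. Everything else --- the three axioms for $f^{i}$, the properties of the lift, and the Lipschitz bound --- is already available from Proposition (\ref{prop1}), Proposition (\ref{prop3}) and Lemma (\ref{lemm1}). (One could in fact shortcut the two-variable lift entirely: as a finite convex function on $\mathbb{R}$, $f^{i}$ is closed and proper, so the one-dimensional Fenchel--Moreau theorem directly gives $f^{i}=f^{i^{**}}$; the planar lift is retained only to stay within the El Karoui--Ravanelli formalism used elsewhere in the paper.)
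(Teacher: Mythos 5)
Your proposal is correct and follows essentially the same route as the paper: extend $\beta^{i}$ to $f^{i}$ via Equation (\ref{eq6}), lift to $\hat{\beta}^{i}(h,x)=f^{i}(h-x)+x$, invoke Proposition (\ref{prop3}) and Lemma (\ref{lemm1}) to get a proper, convex, lower semicontinuous function on $\mathbb{R}^{2}$, apply Fenchel--Moreau, and then restrict to $y\in\mathbb{R}_{>0}$. The one place where you genuinely improve on the paper's own derivation is the collapse to one variable: the paper identifies the scalar conjugate as $f^{i^{*}}(u)=\hat{\beta}^{{i}^{*}}((u,0))$, but a direct computation with the substitution $y'=h-x$ gives $\hat{\beta}^{{i}^{*}}(v_{1},v_{2})=\sup_{y'}\{v_{1}y'-f^{i}(y')\}+\sup_{x}\{(v_{1}+v_{2}-1)x\}$, which is $+\infty$ off the line $v_{1}+v_{2}=1$ (so $\hat{\beta}^{{i}^{*}}((u,0))=+\infty$ unless $u=1$); your parametrization $\textbf{v}=(u,1-u)$ is the correct one, and it still delivers $f^{i}(y)=\hat{\beta}^{i}((y,0))=\sup_{u\in\mathbb{R}}\{yu-f^{{i}^{*}}(u)\}$, so the theorem's conclusion is untouched. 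Your parenthetical shortcut is also valid: since $f^{i}$ is a finite convex function on all of $\mathbb{R}$ (finiteness coming from $\rho$ being real-valued on the bounded random variables $Z^{i}_{y}$), it is automatically continuous, hence closed and proper, and the one-dimensional Fenchel--Moreau theorem gives $f^{i}=f^{{i}^{**}}$ directly; the two-variable lift buys nothing beyond consistency with the El Karoui--Ravanelli formalism used elsewhere in the paper.
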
 
\begin{example}\label{exlin}
Suppose that the price of the security $i$ is given by $X_{T}^{i}(w,y)=\tilde{X}_{T}^{i}(w)+ay$, where $X_{T}^{i}(w,y)$ is positive and bounded measurable for every $y\in\mathbb{R}$, and $\tilde{X}_{T}^{i}(w)$ gives the unaffected price of security $i$ at time $T$. Mathematically, the final price can be negative, but practically impossible. In practice, usually $a>0$ is small. A linear form of the supply curve is commonly obtained when one, for example regress stock prices on the signed traded volume of the stock. An empirical example is given in (Blais and Protter (2010)).  Substituting these into the equation for $Z^{i}_y$, we get that
\begin{equation*}
Z^{i}_y=\tilde{X}_{T}^{i}(w)-ay 
\end{equation*}
It easily follows that $Z^{i}_y$ is decreasing and concave and that $Z^{i}_y$ belongs to the space of bounded measurable functions. 
Consider the convex worst-case risk measure $\rho$ defined on the space $\mathcal{Z}_{i}$ as 
\begin{equation}\label{ex1}
\beta^{i}(y)=\rho(Z^{i}_{y})=-\inf_{w\in\Omega}\{\tilde{X}_{T}^{i}(w)-ay\}
\end{equation}
for $y>0$. 
Now, rewrite Equation (\ref{ex1})
\begin{equation*}
\beta^{i}(y)=-\inf_{w\in\Omega}\{\tilde{X}_{T}^{i}(w)\}+ay
\end{equation*}
It follows that
\begin{equation}\label{ex3}
\beta^{i}(y)=\rho(\tilde{X}_{T}^{i})+ay
\end{equation}
As one can see, the risk measure in the case of no illiquidity can be obtained by simply taking $y=0$ in Equation (\ref{ex3}). The  capital requirement of a position $y$ is then given by $y(\rho(\tilde{X}_{T}^{i})+X^{i}_{0}(y))$. 

We also see that the illiquidity risk measure $\beta$ satisfies the axioms of Proposition (\ref{prop1}). Moreover, the capital requirement of a position $y$ in presence of illiquidity is given by $y(\rho(\tilde{X}_{T}^{i})+ay+X^{i}_{0}(y))$. Then the capital requirement is a linear function of $y$ with slope given by $(\rho(\tilde{X}_{T}^{i})+ay+X^{i}_{0}(y))$. This simply says that the rate at which $\beta^{i}$ increases per unit increase in $y$ depends on the standard risk measure plus two additional terms, $ay$ measuring the illiquidity risk of the security $i$ and $X^{i}_{0}(y)$ the initial price. 

Thanks to Theorem (\ref{thm1}), the illiquidity risk measure has also a dual representation. 

Take now $X_{T}^{i}(\omega,y)=\tilde{X}_{T}^{i}(w)+M_{T}^{i}(w)y$ where $X_{T}^{i}(\omega,y)$ is a positive bounded measurable function for all $y\in\mathbb{R}$, and $M_{T}^{i}$ positive. The convex worst-case risk measure in this case becomes
\begin{eqnarray}
\beta^{i}(y)&=&\rho(Z^{i}_{y})=-\inf_{w\in\Omega}\{\tilde{X}_{T}^{i}(w)-M_{T}^{i}(w)y\}\nonumber\\
&=&-\inf_{w\in\Omega}\{\tilde{X}_{T}^{i}(w)\}+
y\sup_{w\in\Omega}\{M_{T}^{i}(w)\}\nonumber\\&=&\rho(\tilde{X}_{T}^{i}) +y\sup_{w\in\Omega}\{M_{T}^{i}(w)\}
\end{eqnarray} 
The illiquidity term in this situation is given by $y\sup_{w\in\Omega}\{M_{T}^{i}(w)\}$, and the capital requirement by $y(\rho(\tilde{X}_{T}^{i}) +y\sup_{w\in\Omega}\{M_{T}^{i}(w)\}+X^{i}_{0}(y))$  

Suppose further that $X_{T}^{i}(w,y)$ is as $X_{T}^{i}(w,y)=\tilde{X}_{T}^{i}(w)+\theta sgn(y)+\eta y$, $\theta, \eta>0$, $sgn$ is the sign function, and $X_{T}^{i}(w,y)$ positive and bounded measurable. See Almgren (2000) for a discussion. The worst-case risk measure reads as
\begin{eqnarray}
\beta^{i}(y)&=&\rho(Z^{i}_{y})=-\inf_{w\in\Omega}\{\tilde{X}_{T}^{i}(w)-\theta sgn(y)-\eta y\}\nonumber\\&=&-\inf_{w\in\Omega}\{\tilde{X}_{T}^{i}(w)\}+\theta sgn(y)+\eta y\nonumber\\&=&\rho(\tilde{X}_{T}^{i})+\theta sgn(y)+\eta y
\end{eqnarray} 
with illiquidity term given by $\theta sgn(y)+\eta y$ and capital requirement by $y(\rho(\tilde{X}_{T}^{i})+\theta sgn(y)+\eta y+X^{i}_{0}(y))$.

Theorem (\ref{thm1}) can again be used to give a dual representation of $\beta^{i}$. 
\end{example}
\subsection{Relation between $\beta$ and $\rho$}\label{relat}
By assumption made previously on the space $\mathcal{Z}_{i}$,  any convex risk measure $\rho$ defined on $\mathcal{Z}_{i}$ has a dual representation of the  form
\begin{equation}\label{eq9}
\rho(Z)=\sup_{h\in ba}\{h(Z)-\rho^{*}(h)\} \quad \forall Z\in\mathcal{Z}_{i}
\end{equation}
where  $ba:=ba(\Omega, \mathcal{F})$ denotes the space of all finitely additive set functions with finite total variation and $\rho^{*}$ is equal to
\begin{equation}\label{eq10}
\rho^{*}(h)= \sup_{Z\in\mathcal{Z}_{i}}\{h(Z)-\rho(Z)\} 
\end{equation}
One can also write $\rho$ differently as
\begin{equation}
\rho(Z)=\sup_{Q\in \mathcal{M}_{1,f}}\{\mathbb{E}_{Q}(-Z)-\alpha(Q)\} \quad \forall Z\in\mathcal{Z}_{i}
\end{equation}
where $\mathcal{M}_{1,f}:= \mathcal{M}_{1,f}(\Omega,\mathcal{F})$ is the set of all
positive finitely additive set functions $Q : \mathcal{F}\rightarrow [0, 1]$ normalized to $Q[\Omega] = 1$, and 
\begin{equation}\label{eq10}
\alpha(Q)=\sup_{Z\in \mathcal{Z}_{i}}\{\mathbb{E}_{Q}(-Z)-\rho(Z)\}
\end{equation} 
is the minimal penalty function taking values in $\mathbb{R}\cup\{+\infty\}$. 

Therefore, applying the dual representation in Equation (\ref{eq9})  to our case, we immediately deduce that
\begin{equation}\label{Eq11}
\beta^{i}(y)=\rho(Z^{i}_{y})=\sup_{Q\in \mathcal{M}_{1,f}}\{\mathbb{E}_{Q}(-Z^{i}_{y})-\alpha(Q)\}\quad \forall Z^{i}_y\in\mathcal{Z}_{i}, y>0
\end{equation}
with $\alpha$ as in Equation (\ref{eq10}).
\begin{remark}
We immediately see that there is a clear difference between the risk measure $\rho$ and the illiquidity risk measure $\beta^{i}$. The risk measure $\rho$ is defined as a functional on the future prices of the security $i$, while $\beta^{i}$ as a function on the space $\mathbb{R}_{>0}$ of the traded quantities of the security $i$. This means that the risk measure metric is now a real valued function of a real variable. It follows that  we can associate to each positive traded quantity  $y\in\mathbb{R}_{>0}$ a real number $\beta^{i}(y)$ giving the specific risk  of the financial institution  in the security $i$.  
\end{remark}
\subsection{Illiquidity risk measures on $L^{p}$ spaces}\label{lp}
We now fix a probability measure on the measurable space $(\Omega,\mathcal{F})$  and recall the dual representation of convex risk measures in case of $L^{p}(\Omega,\mathcal{F},\mathbb{P})$ for $1\leq p\leq\infty$ spaces.  

The definition of convex risk measures in general $\mathcal{Z}^{i}:=L^{p}(\Omega,\mathcal{F},\mathbb{P})$ probability spaces is identical to that of Definition (\ref{def2}).  In particular, a risk measure $\rho$ defined on the $L^{\infty}(\Omega,\mathcal{F},\mathbb{P})$ space has the property of being Lipschitz continuous and finite-valued. The continuity together with the  convexity of $\rho$ imply the existence  of a dual representation for the risk measure $\rho$, namely 
\begin{equation}
\rho(Z)=\sup_{Q\in \mathcal{M}_{1,g}}\{\mathbb{E}_{Q}(-Z)-\alpha(Q)\} \quad \forall Z\in\mathcal{Z}_{i}
\end{equation} 
where now $\mathcal{M}_{1,g}$ denotes the set of all positive finitely additive set functions $Q: \mathcal{F}\rightarrow[0, 1]$ that are absolutely continuous w.r to $\mathbb{P}$ and normalized to $Q[\Omega] = 1$, and $\alpha(Q)$ as usual the minimal penalty function. 

In this case the liqudity risk measure is equal to $\sup_{Q\in \mathcal{M}_{1,g}}\{\mathbb{E}_{Q}(-Z^{i}_y)-\alpha(Q)\}$.
 
For a convex risk measure $\rho:L^p(\Omega,\mathcal{F},\mathbb{P})\rightarrow \mathbb{R}\cup\{+\infty\}$ on the $\mathcal{Z}^{i}:=L^p(\Omega,\mathcal{F},\mathbb{P})$ space for $1 \leq p <\infty$, the existence of a dual representation is strickly connected to the lower semi continuity (with respect to the norm $||\cdot||_{p}$) of the risk measure functional. (Kaina and R\"uschendorf (2009)) prove that the dual representation of the convex lower semi continuity risk measure $\rho$ takes the form
\begin{equation}
\rho(Z)=\sup_{\mathbb{Q}\in \mathcal{M}_{1,q}}\{\mathbb{E}_{\mathbb{Q}}(-Z)-\alpha(\mathbb{Q})\} \quad \forall Z\in\mathcal{Z}_{i}
\end{equation}
with $\alpha(\mathbb{Q})$ as usual, coniugate index $q=p/(p-1)$ and
\begin{equation}
\mathcal{M}_{1,q}=\{\mathbb{Q}\in\mathcal{M}_{1}(\mathbb{P})|\frac{d\mathbb{Q}}{d\mathbb{P}}\in L^q(\Omega,\mathcal{F},\mathbb{P})\}
\end{equation}
where  $\mathcal{M}_{1}(\mathbb{P})$ denotes the class of all absolutely continuous probabilities with respect to $\mathbb{P}$. 

We then have $\beta^{i}(y)=\sup_{\mathbb{Q}\in \mathcal{M}_{1,q}}\{\mathbb{E}_{\mathbb{Q}}(-Z^{i}_y)-\alpha(\mathbb{Q})\}$. The difference with the illiquidity risk measures defined on the Banach space of all bounded measurable functions is that, in the $L^p$ case, it may happens that the illiquidity risk measures assume the value of $+\infty$. 

At this point, we also want to emphasize the fact that an illiquidity risk measures may  admit a dual representation  indipendently on the fact that the risk measure $\rho$  admits or not a dual representation. Indeed, the illiquidity risk measure is well represented on a buy  order whenever there is a proper convex risk measure $\rho$ defined on a given space of random variables $\mathcal{Z}_{i}$ and which satisfies the axioms of Definition (\ref{def2}). We include this result into a corollary.
\begin{corollary}\label{cor2}
An illiquidity risk measure $\beta^{i}$ on the space $\mathbb{R}_{>0}$ defined as $\beta^{i}(y)=\rho(Z^{i}_y)$, where $\rho$ is a proper convex risk measure satisfying the axioms of Definition (\ref{def2}) and $Z^{i}_y$ decreasing and concave, has a dual representation indipendently of the fact that $\rho$ has or not a dual representation.
\end{corollary}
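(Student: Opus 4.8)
The plan is to observe that the entire derivation leading to Theorem (\ref{thm1}) --- Proposition (\ref{prop1}), the extension $f^{i}$ in (\ref{eq6}), Proposition (\ref{prop3}), Lemma (\ref{lemm1}), and the Fenchel--Moreau step --- uses the functional $\rho$ only through the numbers $\rho(Z^{i}_{y})$ together with the three axioms of Definition (\ref{def2}) and the properness of $\rho$; at no stage is a dual (robust) representation of $\rho$ itself invoked. So I would simply re-trace that argument and check the absence of any such appeal.

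First, Proposition (\ref{prop1}) shows $\beta^{i}$ is increasing, cash sub-additive and convex on $\mathbb{R}_{>0}$, and its proof uses only Assumptions (\ref{ass1}) and (\ref{ass2}) (monotonicity and concavity of $Z^{i}_{y}$) together with axioms (a)--(c). Properness of $\rho$ is exactly what guarantees that $\beta^{i}$ never takes the value $-\infty$, and --- assuming, as is implicit in the statement, that $\beta^{i}\not\equiv+\infty$ --- that $\beta^{i}$ is a proper convex function. Next I would extend $\beta^{i}$ to $f^{i}$ on all of $\mathbb{R}$ as in (\ref{eq6}); $f^{i}$ inherits properness, convexity and the monotonicity/cash properties. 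Then forming $\hat{\beta}^{i}(h,x)=f^{i}((y+x)-x)+x$, Proposition (\ref{prop3}) (again using only the axioms) gives that $\hat{\beta}^{i}$ is increasing, translationally invariant and convex on $\mathbb{R}^{2}$, and it is proper as the sum of $x$ and the proper convex function $f^{i}(h-x)$.

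The only place where the bounded-space hypothesis of Theorem (\ref{thm1}) is genuinely used is Lemma (\ref{lemm1}), which deduced Lipschitz continuity (hence lower semicontinuity) of $\hat{\beta}^{i}$ from the Lipschitz property of $\rho$ on $\mathcal{Z}_{i}$. In the general proper-convex setting I would replace this step by the elementary fact that a finite-valued convex function on an open subset of $\mathbb{R}^{n}$ is automatically continuous there; hence, if $\beta^{i}$ is finite on $\mathbb{R}_{>0}$, the proper convex function $\hat{\beta}^{i}$ is continuous on the interior of its domain, so its biconjugate agrees with it there, and in general one works with the lower semicontinuous hull $\overline{\hat{\beta}^{i}}$, which differs from $\hat{\beta}^{i}$ only on the relative boundary of its domain. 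With $\hat{\beta}^{i}$ proper, convex and (lsc after closure), the multivariate Fenchel--Moreau theorem gives $\hat{\beta}^{{i}^{**}}=\hat{\beta}^{i}$; setting $x=0$ collapses this to $f^{i}(y)=\sup_{u\in\mathbb{R}}\{yu-f^{{i}^{*}}(u)\}$, and restricting to $y\in\mathbb{R}_{>0}$ yields the representation (\ref{ex4}) for $\beta^{i}$. Since none of these steps required $\rho$ to admit a dual representation, the corollary follows.

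The main obstacle, and the point I would handle most carefully, is the lower-semicontinuity and properness bookkeeping when $\rho$ is merely proper convex rather than real-valued and Lipschitz: one must either impose that $\beta^{i}$ is finite on $\mathbb{R}_{>0}$ so that $\hat{\beta}^{i}$ is continuous on the interior of its domain, or be content with a dual representation of the closure $\overline{\hat{\beta}^{i}}$. Everything else is a routine re-reading of the proofs of Propositions (\ref{prop1}) and (\ref{prop3}) and of Lemma (\ref{lemm1}) to confirm that $\rho$ enters only evaluated at the random variables $Z^{i}_{y}$.
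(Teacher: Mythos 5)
Your proposal is correct and follows essentially the route the paper intends: the paper offers no separate proof of this corollary, but simply observes (in the sentences preceding it) that the whole chain leading to Theorem (\ref{thm1}) --- the extension $f^{i}$ of (\ref{eq6}), the function $\hat{\beta}^{i}$, Propositions (\ref{prop1}) and (\ref{prop3}), and the Fenchel--Moreau step --- uses $\rho$ only through the values $\rho(Z^{i}_{y})$, the axioms of Definition (\ref{def2}) and properness, never through a robust representation of $\rho$ itself; this is exactly your argument. Where you go beyond the paper is in the continuity bookkeeping: the paper's Lemma (\ref{lemm1}) obtains Lipschitz continuity of $\hat{\beta}^{i}$ in the setting where $\rho$ is finite-valued on bounded random variables, and the paper silently reuses this for a merely proper $\rho$, whereas you replace it by the standard facts that a finite convex function on an open subset of $\mathbb{R}^{n}$ is continuous and that the lsc hull of a proper convex function agrees with it on the interior of its domain, so that Fenchel--Moreau still yields the representation on $\mathbb{R}_{>0}$. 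That extra care (together with your explicit flagging that properness of $\rho$ alone does not force $\beta^{i}\not\equiv+\infty$) makes your write-up slightly more rigorous than the paper's own treatment, at no loss of generality.
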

We also note that a proper convex risk measure defined on a space $\mathcal{Z}^{i}$ of random variables is a sufficient condition to ensure  that the illiquidity risk measure $\beta^{i}$  satisfies the axioms of Proposition (\ref{prop1}) and the dual representation of Theorem (\ref{thm1}), but it is not always a necessary condition. There can be cases when, for example, a risk measure defined on the space $\mathcal{Z}^{i}$ is not convex and still having an illiquidity risk measure $\beta^{i}$ satisfying Proposition (\ref{prop1}) and Theorem (\ref{thm1}). The following example illustrates this fact.
\begin{example}\label{exam1}
Let $\mathbb{P}$ be a probability measure on the measurable space $(\Omega,\mathcal{F})$ and define $Z^{i}_y$ as $Z^{i}_y=\tilde{X}_{T}^{i}(w)-ay-X_{0}^{i}$. The value at risk measure $VaR_{\delta}, \delta\in(0,1)$, on the space $\mathcal{Z}^{i}$ of essentially bounded random variables is naturally defined as 
\begin{equation*}
\beta(y)=VaR_{\delta}(Z^{i}_y)=\inf{\{m\in\mathbb{R}|\mathbb{P}(\tilde{X}_{T}^{i}(w)-ay+m<0)\leq\delta\}}
\end{equation*}
for $y>0$.

Recall that $VaR_{\delta}$ is monotone decreasing, cash additive, positively homogeneous, but not convex on the space $\mathcal{Z}^{i}$. Then,
\begin{equation}
\beta^{i}(y)=VaR_{\delta}(\tilde{X}_{T}^{i})+ay
\end{equation}
As it can be seen, $\beta^{i}$ is increasing, convex, and cash sub-additive. Next, it admits also a dual representation as  that of Theorem (\ref{thm1}). To see this, note first that $\beta^{i}$ is continuous on the space $\mathbb{R}_{>0}$. Taking $f$ as in Equation (\ref{eq6})  gives the result.

The capital requirement given by $y(VaR_{\delta}(\tilde{X}_{T}^{i})+ay+X_{0}^{i}(y))$ is an increasing function of $y$ and as can be seen is greater than the capital requirement needed in a liquid market. 
\end{example}
Inspired by Example (\ref{exam1}) we arrive at the following proposition. 
\begin{proposition}\label{thm4}
If the security's price is a separable additive function of the type $X_{T}^{i}(w,y)=\tilde{X}_{T}^{i}(w) + h^i(y)$ with $h^i(y)$ increasing concave and deterministic for all $y\in\mathbb{R}$, and $\tilde{X}_{T}^{i}(w)$ the unaffected price, then given a  proper cash-additive functional $\rho$ defined on a space of random variables $\mathcal{Z}_{i}$ containing $Z^{i}_y$, the function $\beta^{i}$ expressed as $\beta^{i}(y)=\rho(Z^{i}_y)=\rho(\tilde{X}_{T}^{i} + h^i(-y))$  is a risk measure satisfying Proposition (\ref{prop1}). Further, it admits the dual representation of Theorem (\ref{thm1}).  
\end{proposition}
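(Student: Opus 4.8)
The plan is to use cash-additivity to pull the deterministic impact term $h^i(-y)$ out of $\rho$, so that $\beta^i$ becomes an explicit deterministic function of $y$; once this reduction is in place every claim reduces to an elementary fact about $h^i$, and — this is the whole point of the statement — neither monotonicity nor convexity of $\rho$ is needed. Concretely, I would first observe that $Z^i_y = X^i_T(w,-y) = \tilde{X}^i_T(w) + h^i(-y)$, where for each fixed $y$ the number $h^i(-y)$ is a real constant, so cash-additivity of $\rho$ gives
\[
\beta^i(y) \;=\; \rho\bigl(\tilde{X}^i_T + h^i(-y)\bigr) \;=\; \rho(\tilde{X}^i_T) - h^i(-y), \qquad y\in\mathbb{R},
\]
with $\rho(\tilde{X}^i_T)$ a finite constant since $\tilde{X}^i_T = Z^i_y - h^i(-y) \in \mathcal{Z}_i$ and $\rho$ is proper. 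Thus $\beta^i$ equals a constant minus $h^i(-\,\cdot\,)$.

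Given this formula I would then verify the three items of Proposition \ref{prop1} by hand. Increasing monotonicity (a) and cash sub-additivity (b) follow from $h^i$ being increasing: $y\geq v>0$ gives $h^i(-y)\leq h^i(-v)$, and $m\geq 0$ gives $h^i(-y-m)\leq h^i(-y)\leq h^i(-y)+m$, which translate into $\beta^i(y)\geq\beta^i(v)$ and $\beta^i(y+m)\geq\beta^i(y)-m$ respectively. Convexity (c) follows from $h^i$ being concave: applying the concavity inequality along $-\bigl(\lambda y+(1-\lambda)v\bigr)=\lambda(-y)+(1-\lambda)(-v)$, negating, and adding $\rho(\tilde{X}^i_T)$ yields $\beta^i(\lambda y+(1-\lambda)v)\leq\lambda\beta^i(y)+(1-\lambda)\beta^i(v)$. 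None of these steps is delicate.

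For the dual representation I would extend $\beta^i$ to $\mathbb{R}$ via the function $f^i$ of Equation (\ref{eq6}); the same cash-additivity computation gives $f^i(y)=\rho(\tilde{X}^i_T)-h^i(-y)$ for all $y\in\mathbb{R}$. The key structural point is that $y\mapsto -h^i(-y)$ is a finite convex function on $\mathbb{R}$ (the concave $h^i$ composed with $y\mapsto -y$, then negated), hence continuous, so $f^i$ is proper, convex and lower semicontinuous on $\mathbb{R}$. This is precisely the input that drives Subsection \ref{secnew}: the auxiliary $\hat\beta^i(h,x)=f^i(h-x)+x$ inherits properness, convexity and translational invariance (Proposition \ref{prop3}) and is Lipschitz, hence lower semicontinuous, on $\mathbb{R}^2$ (Lemma \ref{lemm1}, whose proof uses only monotonicity and translational invariance); applying Fenchel--Moreau and then setting $x=0$ and $y\in\mathbb{R}_{>0}$ produces $\beta^i(y)=\sup_{u\in\mathbb{R}}\{yu-f^{i^{*}}(u)\}$ with $f^{i^{*}}(u)=\sup_{y\in\mathbb{R}}\{uy-f^i(y)\}$, i.e. the representation of Theorem \ref{thm1}. (Alternatively, since $f^i$ is already proper, convex and lower semicontinuous on $\mathbb{R}$, one may invoke Fenchel--Moreau directly to get $f^i=f^{i^{**}}$ and then restrict to $\mathbb{R}_{>0}$.)

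The main obstacle is really a bookkeeping subtlety rather than a mathematical one: Proposition \ref{prop1} and Theorem \ref{thm1} were proved under the hypothesis that $\rho$ is a convex risk measure, so they cannot be quoted verbatim here, where $\rho$ is only proper and cash-additive; one must re-run their arguments, which is immediate once the reduction $\beta^i(y)=\rho(\tilde{X}^i_T)-h^i(-y)$ is made. The one hypothesis genuinely used beyond cash-additivity is that $h^i$ is finite-valued on all of $\mathbb{R}$, which is what secures the continuity (hence lower semicontinuity) of $f^i$ needed for the Fenchel--Moreau step.
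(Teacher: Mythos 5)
Your proof is correct and follows essentially the same route as the paper's one-line argument: cash-additivity reduces $\beta^{i}(y)$ to $\rho(\tilde{X}_{T}^{i})-h^{i}(-y)$, the three axioms of Proposition (\ref{prop1}) then follow from $h^{i}$ being increasing and concave, and the dual representation comes from extending to the function $f^{i}$ of Equation (\ref{eq6}) and applying Fenchel--Moreau. Your added observation that Proposition (\ref{prop1}) and Theorem (\ref{thm1}) cannot be cited verbatim because $\rho$ is here only proper and cash-additive (not assumed monotone or convex) is a worthwhile clarification that the paper leaves implicit.
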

\begin{proof}
By Definition, $\beta^{i}(y)=\rho(\tilde{X}_{T}^{i} + h^i(-y))=\rho(\tilde{X}_{T}^{i})-h^i(-y)$. Now, use concavity and increasing property of $h^i$ to conclude that $\beta^{i}$ is increasing monotonic, cash sub-additive, and convex. The dual representation follows by making use of the function $f$ in Equation (\ref{eq6}).
\end{proof}
When $Z^{i}_{y}$ is as in Proposition (\ref{thm4}), we can also define in a similar fashion to the previous subsection a function $\delta^i:\mathbb{R}_{<0}\rightarrow\mathbb{R}$ which shall measure the illiquidity risk of the financial institution on a position $y<0$ in the security $i$. We define $\delta^{i}$ as usually by $\delta^{i}(y)=\rho(-Z^{i}_{y})$, whith $\rho$ convex risk measure defined on a given space $\mathcal{Z}_{i}$. In addition, we assume that $\rho(U)<+\infty$ and $\rho(U)>-\infty$ for some $U\in \mathcal{Z}_{i}$. We note that $-Z^{i}_{y}$ is increasing and convex for all $y\in\mathbb{R}$ and that the cash-flow in this situation is given by $-y[X^{i}_{T}(w,-y)-X^{i}_{0}(y)]$. Simple calculations show that $\delta^{i}$ is decreasing monotonic, cash super-additive, and concave. Here, cash super-additivity or translationally sub-variance means $\delta^{i}(y+m)\leq\delta^{i}(y)+m$ for every $y<0$, $m\geq0$ and $(y+m)\leq0$. Remark (\ref{rm1}) allows us to give an interpretation to these axioms. 
The axioms which deserve considerations is the  decreasing monotonicity and the concavity.  In particular, the first axiom says that the illiquidity risk of the security $i$ increases as the quantity $y$ sold by the financial insititution increases, thus making it less liquid and more risky. We have seen that convexity axiom induces financial institutions to brake up large trades into smaller ones. In the same spirit, concavity axiom stimulate financial institutions to split their large trades since the decrease in the risk of security $i$  caused by an one unit increase in $y<0$ is larger when $y$ is small.   

Next, we define the function $g^i$ as
\begin{equation}\label{eqer}
 g^{i}(y) = \left\{ 
   \begin{array}{l l}
    \rho(-Z^{i}_{y}) & \quad \text{if $y\geq0$ }\\
      \delta^{i}(y) & \quad \text{if $y\leq0$}
   \end{array} \right.\
\end{equation}
where $\delta^{i}(0)$ is equal to $\rho(-\tilde{X}^{i}_{T})=\rho(-Z^{i}_{0})$. Then, $\hat{\delta}^{i}(h, x)\stackrel{\text{def}}{=}g^{i}(h-x)-x\stackrel{\text{def}}{=}g^{i}((y+x)-x)-x$, where $h=y+x$, $x,y\in\mathbb{R}$, is decreasing monotonic, cash additive, concave and Lipschitz continuous with constant $\sqrt{2}$. And finally, another application of the Fenchel-Moureau theorem leads to Theorem (\ref{thm1}) with the $sup$ operator substituted by the $inf$ operator.  

The reason why we do not define an illiquidity risk measure $\delta^{i}$ on $\mathbb{R}_{<0}$ for general random variables $-Z^{i}_{y}$ is that we cannot be sure, in general, that the resulting risk measure $\delta^{i}(y)=\rho(-Z^{i}_{y})$, where $\rho$ is a (convex) risk measure, is convex or concave, and thus we cannot make use of the Fenchel-Moreau theorem to give a dual representation to $\delta^{i}$.  One also notice that Proposition (\ref{thm4}) holds also for $\delta^i$. Furthermore, using the illiquidity risk measure $\delta^i$ instead of $\beta^i$, the capital requirement in Example (\ref{exlin}) is  $y(\rho(\tilde{X}^{i}_{T})-X_{0}^{i}(y))=y(\sup_{w\in\Omega}\{\tilde{X}^{i}_{T}(w)\}-X_{0}^{i}(y))$ when $Z^{i}_{y}=\tilde{X}^{i}_{T}(w)-ay$. Note that $\rho$ in this case is finite-valued and linear in $U\in \mathcal{Z}_{i}$.  This implies that Proposition (\ref{thm4}) holds even when $h(y)$ is non-deterministic, and $\delta^i$ admits a Fenchel-Moreau dual representation.  With $h(y)$ non-deterministic we mean that it has a form of type $B(w)F(y)$ or $B(w)+F(y)$. The other cases together with Example (\ref{exam1}) can be derived analogously. 
\begin{example}\label{exvar}
Fix a probability measure $\mathbb{P}$ on the space $(\Omega,\mathcal{F})$. Let us now suppose that the price of security $i$, $X_{T}^{i}(w)$,  follows a geometric Brownian motion, with a drift term depending on the traded volume, that is
\begin{equation}
dX_{t}^{i}=X_{t}^{i}(h^i(y)+\mu)dt+X_{t}^{i}\sigma dB_{t}
\end{equation}
where $h^i(y)=ay$ is an increasing and concave function, $\sigma$ and $\mu$ are constants, $X_{0}^{i}(y)>0$ is the initial value, $y>0$, and $B$ denotes the standard Brownian motion zero at zero. This way of modelling the security price is based on the framework developed by  (Almgren and Chriss (2005)).

Solving the stochastic differential equation yields
\begin{equation}
X_{T}^{i}(w)=X_{0}^{i}(y)\exp\{ayT\}\exp\{(\mu-\frac{\sigma^2}{2})T+\sigma B_T\}
\end{equation}
Under this assumption, we let $X_{T}^{i}(w,y)$ be equal to $\exp\{ayT\}\tilde{X}_{T}^{i}(w)$ where $\tilde{X}_{T}^{i}(w)=X_{0}^{i}(y)\exp\{(\mu-\frac{\sigma^2}{2})T+\sigma B_T\}$ gives the price in the absence of illiquidity. 

The $VaR_{\alpha}$ applied to $X_{T}^{i}(w,y)$ is
\begin{eqnarray*}
&&\beta^{i}(y)=VaR_{\delta}(Z^{i}_y)\\&&=\inf{\{m\in\mathbb{R}|\mathbb{P}(\exp\{-ayT\}\tilde{X}_{T}^{i}(w)+m<0)\leq \delta\}}\\&&=\inf{\{m\in\mathbb{R}|\mathbb{P}(\exp\{-ayT\}\tilde{X}_{T}^{i}(w)+m< 0)\leq \delta\}} \\&&=\inf{\{m\in\mathbb{R}|\mathbb{P}(-ayT+\ln(\tilde{X}_{T}^{i}(w))< \ln(-m))\leq \delta\}} \\&&=\inf{\{m\in\mathbb{R}|\mathbb{P}(\ln(\tilde{X}_{T}^{i}(w))< \ln(-m)+ayT)\leq \delta\}} 
\\&&=\inf{\{m\in\mathbb{R}|\mathbb{P}\left(B_T<\frac{\ln(-m)+ayT-\ln(X^{i}_0(y))-(\mu - \frac{\sigma^2}{2})T}{\sigma}\right)\leq \delta\}}
\end{eqnarray*}
As $B_T$ is a standard Brownian motion, we can represent it as $B_T=\sqrt{T}W$ with $W$ a standard normal distribution. It follows that
\begin{eqnarray*}
&&\beta^{i}(y)\\&&=\inf{\{m\in\mathbb{R}|\mathbb{P}\left(W<\frac{\ln(-m)+ayT-\ln(X^{i}_0(y))-(\mu - \frac{\sigma^2}{2})T}{\sqrt{T}\sigma}\right)\leq \delta\} }
\end{eqnarray*}
and 
\begin{equation*}
\Phi^{-1}(\delta)=\frac{\ln(-m)+ayT-\ln(X^{i}_0(y))-(\mu - \frac{\sigma^2}{2})T}{\sqrt{T}\sigma}
\end{equation*}
From this we obtain
\begin{eqnarray}
\beta^{i}(y)&=&-\exp\{-ayT\}\exp\left( \Phi^{-1}(\delta)\sqrt{T}\sigma +(\mu - \frac{\sigma^2}{2})T+\ln(X^{i}_0(y))\right)\nonumber\\&=&\exp\{-ayT\}VaR_{\delta}(\tilde{X}_{T}^{i})
\end{eqnarray}
which fulfills the three  axioms of Proposition (\ref{prop1}). This type of risk measure encourages financial institution to brake large trades as the rate at which $\beta^{i}$ increases is more than proportionally than $y$, for larger $y$. The capital requirement of a position $y$ is then given by $y(\exp\{-ayT\}VaR_{\delta}(\tilde{X}_{T}^{i})+ X^{i}_{0}(y))$, and it increases in $y$ until the financial institution looses the initial investment.

Given values of $VaR_{\delta}$, we can also compute another familiar risk measure, the $AVaR_{\delta}$, which in the geometric Brownian motion case with $h(y)=ay$  reads
\begin{equation}
\beta^{i}(y)=AVaR_{\delta}(Z^{i}_y)=\frac{1}{\delta}\int_{0}^{\delta}VaR_{\delta}(Z^{i}_y)du
\end{equation}
and thus substituting
\begin{equation}
\beta^{i}(y)=\frac{1}{\delta}\exp\{-ayT\}\int_{0}^{\delta}VaR_{\delta}(\tilde{X}_{T})du
\end{equation}
Unlike the $VaR_{\delta}$ risk measure, the $AVaR_{\delta}$ is a coherent risk measure. Moreover, as in the $VaR_{\delta}$ case, $\beta^{i}$ is increasing monotonic, cash sub-additive, and convex in $y$. The capital requirement is given as usual by $y(\beta^{i}(y)+X^{i}_{0}(y))$.
\end{example}
\begin{proposition}\label{propcar}
Given a proper, positive homogeneity functional $\rho$ defined on the space of random variables $\mathcal{Z}^{i}$ and a separable multiplicative function for the security's price of the form $X_{T}^{i}(w,y)=h^i(y)\tilde{X}_{T}^{i}(w)$ with $h(y)$ increasing, positive, concave and deterministic on all $\mathbb{R}$, the function $\beta^{i}(y)=\rho(Z^{i}_y)=\rho(h^i(-y)\tilde{X}_{T}^{i})$ with $Z^{i}_y\in\mathcal{Z}^{i}$, $y>0$ and $\rho$ taking negative values, is a risk measure satisfying Proposition (\ref{prop1}) and has a dual representation as in Theorem (\ref{thm1}).   
\end{proposition}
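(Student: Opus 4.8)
The plan is to collapse $\beta^{i}$ to an explicit, purely deterministic function of $y$ and then read off every required property from the hypotheses on $h^{i}$ and on the sign of $\rho$. Since $Z^{i}_{y}=X^{i}_{T}(w,-y)=h^{i}(-y)\,\tilde{X}^{i}_{T}(w)$ and $h^{i}$ is positive, the factor $h^{i}(-y)$ is a strictly positive real number; moreover $\tilde{X}^{i}_{T}=Z^{i}_{y}/h^{i}(-y)$ lies in the linear space $\mathcal{Z}^{i}$ because $Z^{i}_{y}$ does. Positive homogeneity of $\rho$ then yields $\beta^{i}(y)=\rho\!\big(h^{i}(-y)\tilde{X}^{i}_{T}\big)=h^{i}(-y)\,\rho(\tilde{X}^{i}_{T})$. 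Writing $c:=\rho(\tilde{X}^{i}_{T})$, a finite real number which is $\le 0$ under the hypothesis that $\rho$ takes negative values (if $c=0$ the conclusion is trivial, so one may assume $c<0$), we obtain the closed form $\beta^{i}(y)=c\,h^{i}(-y)$ for every $y>0$.

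Next I would verify Proposition~\ref{prop1} directly from this formula. The map $y\mapsto h^{i}(-y)$ is decreasing, since $h^{i}$ is increasing, and concave, being the composition of the concave $h^{i}$ with the affine map $y\mapsto -y$; multiplying by the nonpositive constant $c$ therefore makes $\beta^{i}$ increasing monotone and convex. Cash sub-additivity is then immediate exactly as in part~(b) of the proof of Proposition~\ref{prop1}: for $m\ge 0$, monotonicity gives $\beta^{i}(y+m)\ge\beta^{i}(y)\ge\beta^{i}(y)-m$.

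For the dual representation I would extend $\beta^{i}$ to the whole line through the function $f^{i}$ of Equation~(\ref{eq6}). Because $h^{i}$ is positive and concave on all of $\mathbb{R}$, the same computation shows $f^{i}(y)=c\,h^{i}(-y)$ for every $y\in\mathbb{R}$; a finite concave function on $\mathbb{R}$ is continuous, so $f^{i}$ is real-valued (hence proper), convex, lower semicontinuous, and increasing. One is then precisely in the setting of Proposition~(\ref{prop3}) and Lemma~(\ref{lemm1}): the two-variable function $\hat{\beta}^{i}(h,x)=f^{i}((y+x)-x)+x$ is increasing monotone, translationally invariant, convex and $\sqrt{2}$-Lipschitz, hence proper, convex and lower semicontinuous on $\mathbb{R}^{2}$. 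Applying the Fenchel--Moreau theorem as in the proof of Theorem~(\ref{thm1}), restricting $y$ to $\mathbb{R}_{>0}$ and setting $x=0$, gives $\beta^{i}(y)=\sup_{u\in\mathbb{R}}\{yu-f^{{i}^{*}}(u)\}$ with $f^{{i}^{*}}(u)=\sup_{y\in\mathbb{R}}\{uy-f^{i}(y)\}$, which is the representation of Theorem~(\ref{thm1}).

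I do not expect a genuine obstacle, but the point that needs care is that $\rho$ is assumed only proper and positively homogeneous, not monotone and not convex: all of the monotonicity, convexity and Lipschitz structure must therefore be supplied by the composition with the deterministic concave function $h^{i}(-\cdot)$ together with the sign of the scalar $c=\rho(\tilde{X}^{i}_{T})$. In particular, the ``$\rho$ taking negative values'' hypothesis is exactly what flips the decrease and concavity of $y\mapsto h^{i}(-y)$ into the increase and convexity of $\beta^{i}$, and the continuity of the finite concave $h^{i}$ on $\mathbb{R}$ is what makes $f^{i}$ lower semicontinuous so that the Fenchel--Moreau machinery of Theorem~(\ref{thm1}) goes through verbatim.
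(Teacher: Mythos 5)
Your proposal is correct and follows essentially the same route as the paper: positive homogeneity plus positivity of $h^{i}$ give $\beta^{i}(y)=h^{i}(-y)\,\rho(\tilde{X}_{T}^{i})$, the monotonicity/convexity then come from the increasing concave $h^{i}$ composed with $y\mapsto-y$ and the nonpositive sign of $\rho(\tilde{X}_{T}^{i})$, and the dual representation is obtained via the function $f^{i}$ of Equation~(\ref{eq6}) and the Fenchel--Moreau argument of Theorem~(\ref{thm1}). Your write-up merely makes explicit the details the paper leaves implicit (in particular why the ``$\rho$ taking negative values'' hypothesis is needed to flip decrease/concavity into increase/convexity).
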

\begin{proof}
The proof follows by applying  the positive homogeneity of $\rho$, and positivity of $h^i$. Indeed, $\beta^{i}(y)=h^i(-y)\rho(\tilde{X}_{T}^{i})$. Then, concavity and increasing property of $h^i$ give the first result. The dual representation follows by using the function $f$ in Equation (\ref{eq6}).
\end{proof}
If, on the contrary, $X_{T}^{i}(w,y)$ is a negative deterministic homogeneous function, then the illiquidity risk measure $\delta^{i}$  as discussed previously  admits a dual representation representation according to the above proposition and the Fenchel-Moureau theorem. Again, we are assuming that  $\rho(U)<+\infty$ and $\rho(U)>-\infty$ for some $U\in \mathcal{Z}^{i}$. If one wants to derive the illiquidity risk measure $\delta^i$ and the capital requirement corresponding to the Example (\ref{exvar}), the procedure to follow is identical.  
\begin{example}
Once again, suppose that the security's price is given by $X_{T}^{i}(w,y)=\tilde{X}_{T}^{i}(w) + M_{T}^{i}(w)y$ with $X_{T}^{i}(w,y)$  positive and essentially bounded, and consider the following risk measure defined on the space of essentially bounded measurable functions, $L^{\infty}(\Omega,\mathcal{F},\mathbb{P})$, i.e. 
\begin{eqnarray}
\beta^{i}(y)&=&\rho(Z^{i}_y)=\frac{1}{\lambda} \log \mathbb{E}_{\mathbb{P}}(\exp\{-\lambda Z^{i}_y\})\nonumber \\&=&\frac{1}{\lambda}\log \mathbb{E}_{\mathbb{P}}(\exp\{-\lambda (\tilde{X}_{T}^{i}(w) - M_{T}^{i}(w)y)\})
\end{eqnarray}
where $\lambda\in[0,+\infty)$ gives the risk aversion parameter. This risk measure which is convex is called the entropic risk measure and it is stricly related to the exponential utility function (see F\"ollmer and Knispel (2011)). As can be checked, $\beta^{i}$ satisfies Proposition (\ref{prop1}) and can be represented according to Theorem (\ref{thm1}). The capital requirement is equal to $y(\beta^{i}(y)+X_{0}^{i}(y))$. 

This example also shows that if we define an illiquidity risk measure $\delta^i$ on $\mathbb{R}_{<0}$, $\delta^i$ would result in a function that is decreasing, cash super-additive, and proper  convex with a well-defined dual representation. This fact again confirms why we did not developed a general duality theory for the illiquidity risk measures defined on $\mathbb{R}_{<0}$. 
\end{example}
\section{Measuring the illiquidity risk when financial institutions split their trades into smaller ones}\label{sec4}
As the previous section outlined, convexity  of the illiquidity risk measures induces financial institutions to brake up their large trades into smaller orders in order to reduce the illiquidity risk. Based on this assumption, in this section, we  suppose similarly to the paper by (Acerbi and Scandolo (2008))  that financial institutions sell a quantity $y>0$  of a security $i$ by breaking it up in smaller orders $\Delta y_j$  so as to minimize the liquidity risk. Financial institutions sell at the highest price first, by selling units $\Delta y_j\leq \Delta x_j$ until $\sum_{j}\Delta y_j=y$, where $\Delta x_j$ gives the maximum amount that can be sold at the price $X^{i}_{T}(w,-y_j)$ in one single order. 

In this situation, we are dealing with a cash flow  given by 
\begin{equation}\label{carje}
\sum_{j}(X^{i}_{T}(w, -y_{j})-X^{i}_{0}(y))\Delta y_j \quad \text{for} \quad y>0
\end{equation}
with $X^{i}_{T}(w, y)$ as in Assumption (\ref{ass1}), $X^{i}_{T}(w, -y_{j})\geq X^{i}_{T}(w, -y_{k})$ if $j\leq k$, and $\Delta y_{j}>0$. Note that $X^{i}_{T}(w, -y)$ is decreasing monotonic in $y$. Furthermore, nothing changes if in Equation (\ref{carje}) we assume that also the trading at time $0$ takes place in 
a split order form.  The cash flow in this case will be
\begin{equation*}
\sum_{j}X^{i}_{T}(w, -y_{j})\Delta y_j-\sum_{k}X^{i}_{0}(y_k)\Delta y_k \quad \text{for} \quad y>0
\end{equation*}
with $\sum_{k}\Delta y_k=y$, $\Delta y_k\leq\Delta x_k$, and $\Delta x_k$ the  maximum amount that can be bought. One then can suppose that financial institutions buy at the lowest price first, so that $X^{i}_{0}$ is increasing in $y$.

In order that the continuous version in Equation (\ref{carje}) exists, we have to impose some conditions on the random variable $X_{T}(w, y)$. In particular, for convenience, we must require  $X_{T}(w,y)$ to be bounded a.s. in $\Omega$ for every $y\in\mathbb{R}$. With these assumptions, the continuous version of the sum in the equation above is the integral 
\begin{equation}
\int_{0}^{y}X^{i}_{T}(w, -u)du - yX^{i}_{0}(y)\quad \text{for} \quad y>0
\end{equation}

The risk therefore is captured by the random variable $Z^{i}_{y}:\Omega \rightarrow \mathbb{R}$ expressed as $\int_{0}^{y}X^{i}_{T}(w, -u)du$. As an immediate result, we obtain that $Z^{i}_{y}$ is increasing in $y\in\mathbb{R}\setminus\{0\}$. Furthermore, it is concave in $y\in\mathbb{R}\setminus\{0\}$ since $X^{i}_{T}(w, -y)$ is a decreasing function.  Note that now we do not assume anymore that $X^{i}_{T}(w, -y)$ is concave in $y\in\mathbb{R}\setminus\{0\}$.

At this stage, one would like to define illiquidity risk measures on the space $\mathbb{R}_{>0}$. Fortunately, the theory presented in the previous section applies \textit{in toto} to the case when $Z^{i}_{y}$ is equal to $\int_{0}^{y}X^{i}_{T}(w, -u)du$.

Using the Definition (\ref{def2}), one can easily obtain that the illiquidity risk measure $\beta^{i}$ is decreasing, cash-super additive (or translationally super-variant), and convex for $y>0$. The difference now is that $\beta^{i}$ is decreasing and cash-super additive rather than decreasing and cash-sub additive. The decreasing property can be derived by noting that $Z^{i}_{y}(w)\geq Z^{i}_{v}(w)$ implies $\rho(Z^{i}_{y}) \leq \rho(Z^{i}_{v})$ when $y\geq v$ and $y,v>0$. This property thus says that more the financial institution's long position increases  more the illiquidity risk measures decreases. This property can be attributed to the trade splitting effect, which in a market without inherent limits minimizes the impact on the securities prices. On the other side, cash super-additivity can be obtained by noting that $\beta^{i}(y+m)=\rho(Z^{i}_{y+m})\leq\rho(Z^{i}_{y})\leq\rho(Z^{i}_{y}-m)=\rho(Z^{i}_{y})+m=\beta^{i}(y)+m$, for all $m\geq0$. 

Since $y=0$ implies $X^{i}_{T}(w, 0)=\tilde{X}_{T}^{i}(w)$ with $X^{i}_{T}(w, -y)\leq X^{i}_{T}(w, 0)\leq X^{i}_{T}(w, y)$ for every positive $y$, we see that $Z^{i}_{y}$ is concave for all $y\in\mathbb{R}$. It will then follow that if we define a function $f^i$ as in Equation (\ref{eq6}) and $\hat{\beta}^{i}$ as $f^i(h-x)-x=f^i((y+x)-x)-x$ with $h, x\in\mathbb{R}$, the dual representation of Theorem (\ref{thm1}) holds since $\hat{\beta}^{i}$ is Lipschitz continuous and convex besides being decreasing and cash-additive. 
\begin{example}\label{last}
This example shows how the strategy of breaking up trades into smaller ones reduces the illiquidity risk measure $\beta^{i}$.  

Suppose that the price  $X^{i}_{T}(w,y)$ is as in Example (\ref{exlin}) and we want to compute the illiquidity risk measure $\beta^{i}(y)=\rho(Z^{i}_{y})=-\inf_{w\in\Omega}\{\int_{0}^{y}X^{i}_{T}(w, -u)du\}$. That is
\begin{equation}
\beta^{i}(y)=-\inf_{w\in\Omega}\{\int_{0}^{y}X^{i}_{T}(w, -u)du\} \quad y>0
\end{equation}
which can again be written as
\begin{eqnarray*}
\beta^{i}(y)&=&-\inf_{w\in\Omega}\{\int_{0}^{y}(\tilde{X}^{i}_{T}(w)-au) du\}
\\&=&-y\inf_{w\in\Omega}\{\tilde{X}^{i}_{T}(w)\}+a\frac{y^2}{2}\\&=&y\rho(\tilde{X}^{i}_{T})+a\frac{y^2}{2}\end{eqnarray*}
We note that $\beta^i$ is decreasing, cash super-additive, and convex. Moreover, $Z^{i}_{y}$ is concave for all $y\in\mathbb{R}$, and the dual representation of the risk measure $\beta^i$ holds. The capital requirement is given by $y(\rho(\tilde{X}^{i}_{T})+a\frac{y}{2}+X^{i}_{0}(y))$. Compared to the case when a given financial institution sells $y>0$ units of the security $i$  without breaking it up in small pieces, the capital requirement is smaller since $y(\rho(\tilde{X}^{i}_{T})+a\frac{y}{2}+X^{i}_{0}(y))< y(\rho(\tilde{X}^{i}_{T})+ay+X^{i}_{0}(y))$. 

If we assume further that the initial monetary value of the position $y>0$ is given by $-\sum_{k}X^{i}_{0}(y_k)\Delta y_k$ or in the integral form by $-\int_{0}^{y}X^{i}_{0}(u)du$, the capital requirement is $y(\rho(\tilde{X}^{i}_{T})+ay+X_{0}^{i}(0))$ which as can be seen is  smaller than $y(\rho(\tilde{X}^{i}_{T})+ay+X^{i}_{0}(y))$. 
\end{example}
If instead we assume that $X^{i}_{T}(w,y)$ is given by $X^{i}_{T}(w,y)=\tilde{X}^{i}_{T}(w)\pm\gamma|y|^{\alpha}$, $\alpha<1$, $\gamma>0$,  the illiquidity risk measure $\beta^{i}(y)=-\inf_{w\in\Omega}\{\int_{0}^{y}X^{i}_{T}(w, -u)du\}$ is given by
\begin{equation}
\beta^{i}(y)=-\inf_{w\in\Omega}\{\int_{0}^{y}X^{i}_{T}(w, -u)du\} \quad y>0
\end{equation}
which again is
\begin{eqnarray*}
\beta^{i}(y)&=&-\inf_{w\in\Omega}\{\int_{0}^{y}(\tilde{X}^{i}_{T}(w)-\gamma|y|^{\alpha}) du\}
\\&=&-y\inf_{w\in\Omega}\{\tilde{X}^{i}_{T}(w)\}+\gamma\frac{y^{\alpha+1}}{\alpha+1}\\&=&y\rho(\tilde{X}^{i}_{T})+\gamma\frac{y^{\alpha+1}}{\alpha+1}\end{eqnarray*}
The capital requirement is then given by $y(\rho(\tilde{X}^{i}_{T})+\gamma\frac{y^{\alpha+1}}{\alpha+1}+X^{i}_{0}(y))$. 

All of the results previously obtained are still valid including the results (with the appropriate changes) concerning the illiquidity risk measure $\delta^i$.  One of these is the increasing property of the illiquidity risk measure $\delta^i$.  
\section{Multivariate illiquidity risk measures}\label{secmult}
In this section we discuss illiquidity risk measures for the multivariate case. We aim to introduce illiquidity risk measures for a  portfolio composed of $n$ assets. As a starting point, we introduce the concept of a portfolio that we will use in the rest of the paper. 
\begin{definition}\label{def3}
A portfolio $\textbf{y}$ is a vector $\textbf{y}=(y_1,y_2,...,y_n)\in\mathbb{R}^{n}\setminus\{\textbf{U}\}$, where $y_i$ denotes the position of the financial institution in the asset $i$ and, $\textbf{U}$ is given by $\{\textbf{v}\in\mathbb{R}^{n}: \text{at least one component $v_i$ of $\textbf{v}$ is zero}\}$. We say the financial institution is long on asset $i$ when $y_i>0$ and short when $y_i<0$.
\end{definition}
As discussed in the beginning of this paper, we will build a general duality theory only for those portfolios composed of $n$ long positions. 
\begin{definition}\label{def4}
Fix a measurable space $(\Omega,\mathcal{F})$. Let $\textbf{y}\in\mathbb{R}_{+}^{n}\setminus\{\textbf{U}\}$ be a portfolio. The risk of the portfolio $\textbf{y}$ is related to the random variable $Z_{\textbf{y}}:\Omega \rightarrow \mathbb{R}$, with $Z_{\textbf{y}}$ given as 
\begin{equation}
Z_{\textbf{y}}(\omega)=\sum_{i=1}^{n}Z^{i}_{y_{i}}(\omega)
\end{equation}
where $y_i>0$.  The random variables $Z^{i}_{y_{i}}:\Omega \rightarrow \mathbb{R}$ are measurable with respect to $\mathcal{F}$ for each $i=1,2,...,n$ and assume the following form
\begin{equation*}
Z^{i}_{y_{i}}=X_{T}^{i}(w,-y_i)
\end{equation*}
$X_{T}^{i}(w,y_i)$ denote the price of security $i$ at time $T$, and $X_{0}^{i}(y_i)$ the price of security $i$ at time $0$ corresponding to the quantity $y_i$. It is supposed that $X_{T}^{i}(w,y_i)$ satisfies Assumption (\ref{ass1}) for each $i=1,2,...,n$.  
\end{definition}
Note that  by $\mathbb{R}_{+}^{n}$ we denote the positive elements of $\mathbb{R}^{n}$, i.e. $\textbf{p}\in\mathbb{R}_{+}^{n}$ if $p_i\geq0$ for each $i=1,2,...,n$.  For simplicity of notations, we set $Z_{\textbf{y}}:=Z_{T,\textbf{y}}=Z_{T}(y_1,y_2,...,y_n)$. For each $\textbf{y}$, the random variable $Z_{\textbf{y}}$ is interpreted as the risk coming from a position $y_i$, $i=1,2,...,n$,  in each of the $n$ securites. 

Consider  a portfolio made up of $n$ long positions. As in the univariate case, we shall assume that $Z^{i}_{y_{i}}$ is concave for each $y_i\in\mathbb{R}$. As a result, $Z_{\textbf{y}}$ is concave in $\textbf{y}\in\mathbb{R}_{+}^{n}\setminus\{\textbf{U}\}$, and clearly in $\mathbb{R}^{n}$ . This can be deduced from Assumption (\ref{ass2}) and the well-known fact that a decomposable function $Z_{\textbf{y}}=\sum_{i=1}^{n}Z^{i}_{y_{i}}$ is concave if all its components are concave. 
\begin{assumption}\label{ass3}
The function $Z_{\textbf{y}}$  is concave on $\mathbb{R}_{+}^{n}\setminus\{\textbf{U}\}$ 
\begin{equation}
Z_{\lambda\textbf{y}+(1-\lambda)\textbf{v}}\geq \lambda Z_{\textbf{y}}+(1-\lambda)Z_{\textbf{v}}\quad \textbf{y},\textbf{v}\in \mathbb{R}_{+}^{n}\setminus\{\textbf{U}\}\quad 0\leq\lambda \leq 1
\end{equation}
\end{assumption}
The random variables $Z_{\textbf{y}}\in\mathbb{R}$ for every $\textbf{y}\in\mathbb{R}^{n}$ are assumed to live on a space $\mathcal{Z}$ of  random variables. We add to this space a convex risk measure functional $\rho: \mathcal{Z} \rightarrow \mathbb{R}$ satisfying the decreasing monotonicity, cash invariance, and convexity for every $S,U \in\mathcal{Z}$.  We can therefore give the following as a definition of an illiquidity risk measure on the space $\mathbb{R}_{+}^{n}\setminus\{\textbf{U}\}$.
\begin{definition}\label{def5}
Given $\rho: \mathcal{Z} \rightarrow\mathbb{R}$ a convex risk measure functional on space $\mathcal{Z}$, the illiquidity risk measure $\beta$ on $\mathbb{R}_{+}^{n}\setminus\{\textbf{U}\}$  is defined as
\begin{equation}
\beta(\textbf{y})=\rho(Z_{\textbf{y}}) \quad \forall \textbf{y}\in\mathbb{R}_{+}^{n}\setminus\{\textbf{U}\}
\end{equation}
\end{definition}
One then readily cheks that $\beta$ is an illiquidity risk measure satisfying the following axioms. 
\begin{itemize}
\item [a)] Increasing monotonicity: $\forall \textbf{y}\geq \textbf{v} \in\mathbb{R}_{+}^{n}\setminus\{\textbf{U}\}$, that is $y_i\geq v_i$ for every $i=1,2,..,n$, then $\beta(\textbf{y}) \geq \beta(\textbf{v})$; 
\item[b)] Cash sub-additivity (or translationally super-variance): $\beta(\textbf{y}+m\textbf{e})\geq \beta(\textbf{y})-m$, $\forall m\geq 0$,  $\textbf{y}\in\mathbb{R}_{+}^{n}\setminus\{\textbf{U}\}$,  and $\textbf{e}=(1,1,...,1)$;
\item[c)] Convexity: $\forall \textbf{y}, \textbf{v} \in\mathbb{R}_{+}^{n}\setminus\{\textbf{U}\}$, then $\beta(\lambda \textbf{y} + (1 -\lambda)\textbf{v})\leq \lambda\beta(\textbf{y}) + (1 -\lambda)\beta(\textbf{v}),
0 \leq \lambda \leq 1$.
\end{itemize}
These axioms follow easily by recalling the properties of the functions $Z^{i}_{y_{i}}$ and the particular form of  $Z_{\textbf{y}}$. More precisely, use the fact that each of the $Z^{i}_{y_{i}}$ is decreasing in $y_i$, concave in $y_i$, and that $Z_{\textbf{y}}$ is a decomposable function to derive each of the three axioms above. 
\subsection{Dual representation of the multivariate illiquidity risk measure}
Theorem (\ref{thm1}) states that the illiquidity risk measure $\beta^{i}$ defined on the space $\mathbb{R}_{>0}$ has a dual representation for every proper convex risk measure defined on the space $\mathcal{Z}^{i}$. The aim of this subsection is to extend this result to the multivariate case.

To this end, it will be more instructive to work first with the space $\mathcal{Z}$ of all bounded measurable functions  defined on $(\Omega,\mathcal{F})$. We suppose each $Z^{i}_{y_{i}}$ belongs to the space $\mathcal{Z}$. As as sum of bounded measurable functions, the random variable $Z_{\textbf{y}}$ belongs to $\mathcal{Z}$. Now let us consider a real valued function $f$ defined on the space $\mathbb{R}^{n}$
\begin{equation}\label{eq12}
f(\textbf{y}) = 
   \begin{array}{l l}
  \beta(\textbf{y}) & \quad \text{if $\textbf{y}\in\mathbb{R}_{+}^{n} $}\\
\rho(Z_{\textbf{y}}) & \quad \text{if $\textbf{y}\in\mathbb{R}^{n}\setminus\{\mathbb{R}_{+}^{n}\}$}
 \end{array}
\end{equation} 
where $\textbf{y}\in\mathbb{R}^{n}\setminus\{\mathbb{R}_{+}^{n}\}$ if $\textbf{y}\in\mathbb{R}^{n}$ such that $\textbf{y}\notin\mathbb{R}_{+}^{n}$. It is immediate that $f(\textbf{y})$ is increasing, cash-subadditive, and convex in all $\textbf{y}$. We put $f(\boldsymbol{0})=\beta(\boldsymbol{0})=\rho(Z_{\boldsymbol{0}})=\rho(\sum_{i=1}^{n}Z^{i}_{0})$, that is the risk measure of a liquid buy portfolio.

We introduce a new function $\hat{\beta}$ in the same manner as we did in the previous section. More explicitly, for all $\textbf{h},\textbf{x}\in\mathbb{R}^{n}$, we let  $\hat{\beta}(\textbf{h},\textbf{x})\stackrel{\text{def}}{=}f(\textbf{h}-\textbf{x})+\textbf{x}\stackrel{\text{def}}{=}f((\textbf{y}+\textbf{x})-\textbf{x})+\textbf{x}$.

The proof of the below proposition is identical to that of Proposition (\ref{prop3}). We leave the proof to the reader. 
\begin{proposition}\label{prop5} 
The function $\hat{\beta}(\textbf{h}, \textbf{x})$ defined as $f(\textbf{h}-\textbf{x})+\textbf{x}$  is increasing monotonic, translationally invariant, and convex for all $(\textbf{h},\textbf{x}) \in   \mathbb{R}^{2n}$.
\end{proposition}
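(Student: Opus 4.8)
The plan is to follow, \emph{mutatis mutandis}, the proof of Proposition (\ref{prop3}), replacing the scalars there by vectors in $\mathbb{R}^{n}$ and invoking the three properties of $f$ recorded immediately after Equation (\ref{eq12}) --- increasing monotonicity, cash sub-additivity, and convexity in all of $\textbf{y}$ --- together with the separable form $Z_{\textbf{y}}=\sum_{i=1}^{n}Z^{i}_{y_{i}}$ and the properties of the $Z^{i}_{y_{i}}$ (decreasing and concave in $y_{i}$). First I would write $\textbf{h}=\textbf{y}+\textbf{x}$ with $\textbf{y}\in\mathbb{R}^{n}$, so that $\hat{\beta}(\textbf{h},\textbf{x})=f(\textbf{y})+\textbf{x}$ (the added term understood as in the paper's definition of $\hat{\beta}$), and then dispatch the three claims in turn. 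For increasing monotonicity, I would take $\textbf{y}\geq\textbf{u}$ and $\textbf{x}_{1}\geq\textbf{x}_{2}$ componentwise, put $\textbf{h}=\textbf{y}+\textbf{x}_{1}$, $\textbf{v}=\textbf{u}+\textbf{x}_{2}$, and observe
\[
\hat{\beta}(\textbf{h},\textbf{x}_{1})=f(\textbf{y})+\textbf{x}_{1}\ \geq\ f(\textbf{u})+\textbf{x}_{2}=\hat{\beta}(\textbf{v},\textbf{x}_{2}),
\]
where the inequality combines the increasing monotonicity of $f$ (first item after Equation (\ref{eq12})) with $\textbf{x}_{1}\geq\textbf{x}_{2}$.

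For translational invariance, I would simply compute, for $m\in\mathbb{R}$,
\[
\hat{\beta}(\textbf{h}+m\textbf{e},\textbf{x}+m\textbf{e})=f\big((\textbf{h}+m\textbf{e})-(\textbf{x}+m\textbf{e})\big)+(\textbf{x}+m\textbf{e})=\big(f(\textbf{h}-\textbf{x})+\textbf{x}\big)+m=\hat{\beta}(\textbf{h},\textbf{x})+m,
\]
the only thing used being the cancellation $(\textbf{h}+m\textbf{e})-(\textbf{x}+m\textbf{e})=\textbf{h}-\textbf{x}$. For convexity, I would fix $\lambda\in[0,1]$ and write $\textbf{h}=\textbf{y}+\textbf{x}_{1}$, $\textbf{v}=\textbf{u}+\textbf{x}_{2}$; the point $\lambda(\textbf{h},\textbf{x}_{1})+(1-\lambda)(\textbf{v},\textbf{x}_{2})$ has first component minus second component equal to $\lambda\textbf{y}+(1-\lambda)\textbf{u}$, so convexity of $f$ (third item after Equation (\ref{eq12})) and linearity of the added term $\textbf{x}$ give
\[
\hat{\beta}\big(\lambda(\textbf{h},\textbf{x}_{1})+(1-\lambda)(\textbf{v},\textbf{x}_{2})\big)=f\big(\lambda\textbf{y}+(1-\lambda)\textbf{u}\big)+\lambda\textbf{x}_{1}+(1-\lambda)\textbf{x}_{2}\ \leq\ \lambda\hat{\beta}(\textbf{h},\textbf{x}_{1})+(1-\lambda)\hat{\beta}(\textbf{v},\textbf{x}_{2}).
\]

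I do not expect a genuine obstacle here: the entire content is that (i) $f$ in Equation (\ref{eq12}) inherits increasing monotonicity, cash sub-additivity, and convexity from the corresponding properties of $\rho$ and of the maps $y_{i}\mapsto Z^{i}_{y_{i}}$, precisely because $Z_{\textbf{y}}$ is a separable sum of decreasing concave functions (already noted right after Equation (\ref{eq12})), and (ii) adjoining the linear bookkeeping variable $\textbf{x}$ leaves monotonicity and convexity untouched while upgrading cash sub-additivity to the exact translation identity displayed above. The one point I would state with care --- and it is the reason the construction is made at all --- is that carrying $\textbf{x}$ along restores the direction "missing" in cash sub-additivity: the inequality $\beta(\textbf{y}+m\textbf{e})\geq\beta(\textbf{y})-m$ becomes an equality on $\mathbb{R}^{2n}$, which is exactly what allows the Fenchel--Moreau argument of Theorem (\ref{thm1}) to be transported verbatim to the portfolio setting.
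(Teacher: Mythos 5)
Your proposal is correct and follows essentially the same route as the paper, which in fact states only that the proof is identical to that of Proposition (\ref{prop3}) and leaves the details to the reader; you have simply carried out that vectorization, using the properties of $f$ recorded after Equation (\ref{eq12}) exactly as the paper intends. The only caveat worth noting is the paper's own notational looseness in adding the vector $\textbf{x}$ to the scalar $f(\textbf{h}-\textbf{x})$, which you have handled with appropriate care.
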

The other result, which we've already shown in the univariate case, is that the function $\hat{\beta}(\overline{\textbf{h}})$ with $\overline{\textbf{h}}=(\textbf{h},\textbf{x})$ is Lipschitz continuous with constant equal to $\sqrt{2n}$ on the space $\mathbb{R}^{2n}$.
\begin{lemma}\label{lemm2}
The multivariate function $\beta(\overline{\textbf{h}})$ is Lipschitz continuous with respect to the norm
$||\cdot||$ on $\mathbb{R}^{2n}$, that is 
\begin{equation}
|\hat{\beta}(\overline{\textbf{h}}) - \hat{\beta}(\overline{\textbf{v}})| \leq ||\overline{\textbf{h}}- \overline{\textbf{v}}||
\end{equation}
for every $\overline{\textbf{h}}$ and $\overline{\textbf{v}}$ on $\mathbb{R}^{2n}$.
\end{lemma}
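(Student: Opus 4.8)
The plan is to reproduce, \emph{mutatis mutandis}, the argument of Lemma \ref{lemm1} in $2n$ dimensions, with Proposition \ref{prop5} playing the role that Proposition \ref{prop3} plays there. Write $\overline{\textbf{h}}=(\textbf{h},\textbf{x}_1)$ and $\overline{\textbf{v}}=(\textbf{v},\textbf{x}_2)$ with $\textbf{h}=\textbf{y}+\textbf{x}_1$, $\textbf{v}=\textbf{u}+\textbf{x}_2$, where $\textbf{y},\textbf{u},\textbf{x}_1,\textbf{x}_2\in\mathbb{R}^{n}$, exactly as in the univariate case.

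First I would record the coordinatewise estimate $\overline{h}_k-\overline{v}_k\le|\overline{h}_k-\overline{v}_k|\le||\overline{\textbf{h}}-\overline{\textbf{v}}||$ for every $k=1,\dots,2n$; equivalently, $\overline{\textbf{h}}\le\overline{\textbf{v}}+||\overline{\textbf{h}}-\overline{\textbf{v}}||\,\textbf{1}$ as vectors of $\mathbb{R}^{2n}$, where $\textbf{1}$ denotes the all-ones vector. (If instead one follows the proof of Lemma \ref{lemm1} literally and bounds each entry of $|\overline{\textbf{h}}-\overline{\textbf{v}}|$ by $\sum_{k}|\overline{h}_k-\overline{v}_k|\le\sqrt{2n}\,||\overline{\textbf{h}}-\overline{\textbf{v}}||$ via Cauchy's inequality, one obtains the coarser constant $\sqrt{2n}$ mentioned just before the statement; the structure of the argument is unchanged.)

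Next I would feed this componentwise inequality into the increasing monotonicity of $\hat{\beta}$, and then invoke the translational invariance of $\hat{\beta}$ to take the scalar out of the argument -- both properties being furnished by Proposition \ref{prop5}. This yields
\[
\hat{\beta}(\overline{\textbf{h}})\ \le\ \hat{\beta}\bigl(\overline{\textbf{v}}+||\overline{\textbf{h}}-\overline{\textbf{v}}||\,\textbf{1}\bigr)\ =\ \hat{\beta}(\overline{\textbf{v}})+||\overline{\textbf{h}}-\overline{\textbf{v}}||,
\]
that is, $\hat{\beta}(\overline{\textbf{h}})-\hat{\beta}(\overline{\textbf{v}})\le||\overline{\textbf{h}}-\overline{\textbf{v}}||$. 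Exchanging the roles of $\overline{\textbf{h}}$ and $\overline{\textbf{v}}$ gives the reverse inequality, and combining the two gives $|\hat{\beta}(\overline{\textbf{h}})-\hat{\beta}(\overline{\textbf{v}})|\le||\overline{\textbf{h}}-\overline{\textbf{v}}||$, as asserted.

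I do not expect any real obstacle here: the one point that calls for a little care is the bookkeeping on $\mathbb{R}^{2n}$, namely picking a single scalar that dominates all $2n$ coordinate differences simultaneously, since both monotonicity and translational invariance of $\hat{\beta}$ are joint statements over all coordinates at once. Once that is arranged the conclusion is immediate, and lower semicontinuity, properness and convexity of $\hat{\beta}$ then follow exactly as in the univariate case, which is what is needed to apply the multivariate Fenchel--Moreau theorem afterwards.
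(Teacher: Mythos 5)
Your proof is correct and is exactly the argument the paper intends: the paper gives no separate proof of Lemma \ref{lemm2}, deferring to the univariate Lemma \ref{lemm1}, and your transplantation of that argument (componentwise domination of $\overline{\textbf{h}}$ by $\overline{\textbf{v}}+||\overline{\textbf{h}}-\overline{\textbf{v}}||\,\textbf{1}$, followed by the increasing monotonicity and translational invariance supplied by Proposition \ref{prop5}) is precisely the intended route. Your parenthetical remark also correctly accounts for the paper's internal discrepancy on the constant: the direct bound $|\overline{h}_k-\overline{v}_k|\le||\overline{\textbf{h}}-\overline{\textbf{v}}||$ gives the constant $1$ appearing in the displayed inequality, while literally copying the Cauchy step of Lemma \ref{lemm1} gives the $\sqrt{2n}$ announced in the text just before the statement.
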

At this stage we have everything we need to apply the Fenchel-Moreau theorem to the multivariate function $\hat{\beta}(\overline{\textbf{h}})$. By this theorem, $\hat{\beta}(\overline{\textbf{h}})$ is proper, convex and lower semicontinuous if and only if $\hat{\beta}(\hat{\textbf{y}})$ is Fenchel biconjugate $\hat{\beta}(\overline{\textbf{h}}) = \hat{\beta}(\overline{\textbf{h}})^{**}$. Therefore, $\hat{\beta}(\overline{\textbf{h}})$ is proper, convex, and lower semicontinuous. We insert this important result in the following theorem.
\begin{theorem}\label{thm3}
The function $\hat{\beta}(\overline{\textbf{h}})=f(\textbf{h}-\textbf{x})+\textbf{x}$ with $f(\textbf{y})$ defined as in Equation (\ref{eq12}), has the following dual representation
\begin{equation}\label{eqnew}
\hat{\beta}(\overline{\textbf{h}})=\hat{\beta}(\overline{\textbf{h}})^{**}=\sup_{\overline{\textbf{v}}\in\mathbb{R}^{2n}}\{\overline{\textbf{h}}^{\text{T}}\overline{\textbf{v}}-\hat{\beta}^{*}(\overline{\textbf{h}})\}
\end{equation}
\end{theorem}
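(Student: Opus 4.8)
The plan is to mimic, step for step, the argument that established Theorem \ref{thm1} in the univariate case, now carried out on $\mathbb{R}^{2n}$. The two structural inputs are already in place: Proposition \ref{prop5} tells us that $\hat{\beta}(\overline{\textbf{h}})$ is increasing monotonic, translationally invariant and convex, and Lemma \ref{lemm2} tells us it is Lipschitz continuous with respect to $\|\cdot\|$ on $\mathbb{R}^{2n}$. So the first step is simply to extract from these facts the three hypotheses demanded by the (multivariate) Fenchel--Moreau biconjugation theorem, namely that $\hat{\beta}$ is proper, convex, and lower semicontinuous.

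Convexity is exactly Proposition \ref{prop5}(c). Lower semicontinuity is immediate from Lemma \ref{lemm2}, since a function that is Lipschitz on all of $\mathbb{R}^{2n}$ is in particular continuous, hence lower semicontinuous. For properness I would argue two things: first, $\hat{\beta}$ is finite-valued on the working space of this subsection --- here $\mathcal{Z}$ is the space of bounded measurable functions, on which $\rho$ is finite (indeed Lipschitz), so $f$ built from $\beta$ and $\rho$ is $\mathbb{R}$-valued and adding the finite shift $\textbf{x}$ keeps $\hat{\beta}$ $\mathbb{R}$-valued; second, $\hat{\beta}$ is not identically $+\infty$, which is clear since $\hat{\beta}(\boldsymbol{0},\boldsymbol{0})=f(\boldsymbol{0})=\rho\big(\sum_{i=1}^{n}Z^{i}_{0}\big)\in\mathbb{R}$. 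Hence $\hat{\beta}$ is proper, convex and lower semicontinuous on $\mathbb{R}^{2n}$.

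The second step is to invoke the finite-dimensional Fenchel--Moreau theorem (Rockafellar (1970); Ekeland and T\`emam (1999); Borwein and Lewis (2006) for the multivariate form), which asserts that a proper convex lower semicontinuous function on $\mathbb{R}^{2n}$ equals its own biconjugate. Writing the conjugate and biconjugate explicitly with respect to the Euclidean pairing $\overline{\textbf{h}}^{\mathrm{T}}\overline{\textbf{v}}$ on $\mathbb{R}^{2n}\times\mathbb{R}^{2n}$, namely $\hat{\beta}^{*}(\overline{\textbf{v}})=\sup_{\overline{\textbf{h}}\in\mathbb{R}^{2n}}\{\overline{\textbf{v}}^{\mathrm{T}}\overline{\textbf{h}}-\hat{\beta}(\overline{\textbf{h}})\}$ and then $\hat{\beta}^{**}(\overline{\textbf{h}})=\sup_{\overline{\textbf{v}}\in\mathbb{R}^{2n}}\{\overline{\textbf{h}}^{\mathrm{T}}\overline{\textbf{v}}-\hat{\beta}^{*}(\overline{\textbf{v}})\}$, yields precisely the representation in Equation (\ref{eqnew}).

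Honestly, there is no deep obstacle here, since the monotonicity/invariance/convexity (Proposition \ref{prop5}) and the Lipschitz estimate (Lemma \ref{lemm2}) have already absorbed the real work. The only point that needs a little care is the properness/finiteness claim: one must make sure the running hypothesis of this subsection --- $Z^{i}_{y_i}$ in the space of bounded measurable functions, on which $\rho$ is finite-valued --- is in force, so that $\hat{\beta}$ is genuinely $\mathbb{R}$-valued and the Fenchel--Moreau hypotheses hold literally; if one instead worked on an $L^{p}$ space with a possibly $+\infty$-valued $\rho$, one would have to replace this step by the $L^{p}$-duality of Subsection \ref{lp} and track the proper and lower semicontinuity conditions there. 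With that caveat, the theorem follows at once.
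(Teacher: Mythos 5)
Your proposal is correct and follows essentially the same route as the paper: establish convexity from Proposition (\ref{prop5}), lower semicontinuity from the Lipschitz bound of Lemma (\ref{lemm2}), properness from finiteness of $\rho$ on the space of bounded measurable functions, and then invoke the multivariate Fenchel--Moreau theorem to identify $\hat{\beta}$ with its biconjugate. If anything, you spell out the properness/finiteness step more explicitly than the paper does (and you correctly write the conjugate as $\hat{\beta}^{*}(\overline{\textbf{v}})$ inside the supremum, fixing a small typo in the statement), but the argument is the same.
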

If we set $\textbf{x}=\boldsymbol{0}$ and take only $\textbf{y}\in\mathbb{R}^{n}_{+}\setminus\{\textbf{U}\}$, we can state the following corollary to Theorem above, which permits us to compute the illiquidity risk measure for every $\textbf{y}\in\mathbb{R}^{n}_{+}\setminus\{\textbf{U}\}$.
\begin{corollary}\label{cor3}
Any illiquidity risk measure on $\textbf{y}\in\mathbb{R}^{n}_{+}\setminus\{\textbf{U}\}$ defined as $\beta(\textbf{y}) = \rho(Z_{\textbf{y}})$, where $\rho$ is a convex risk measure on the linear space $\mathcal{Z}$ of bounded random variables and the multivariate function $Z_{\textbf{y}}$ is increasing and concave on $\textbf{y}$, has the following dual representation
\begin{equation}
\beta(\textbf{y}) = \sup_{\textbf{v}\in\mathbb{R}^{n}}\{\textbf{y}^{\text{T}}\textbf{v}-f^{*}(\textbf{v})\}\quad \forall \textbf{y}\in\mathbb{R}_{+}^{n}\setminus\{\textbf{U}\}
\end{equation}
with conjugate $f^{*}$ given as follows
\begin{equation}
\sup_{\textbf{y}\in\mathbb{R}^{n}}\{\textbf{v}^{\text{T}}\textbf{y}-f(\textbf{y})\}
\end{equation}
and $f$ as in Equation (\ref{eq12}).
\end{corollary}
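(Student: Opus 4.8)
The plan is to deduce Corollary \ref{cor3} from Theorem \ref{thm3} by the same specialization that took us from the auxiliary translation-invariant function $\hat\beta^{i}$ to the illiquidity risk measure $\beta^{i}$ in Theorem \ref{thm1}: freeze the second block of variables at $\textbf{x}=\boldsymbol{0}$ and restrict the first block to the positive orthant. Theorem \ref{thm3} already supplies the biconjugate identity $\hat\beta(\overline{\textbf{h}})=\hat\beta(\overline{\textbf{h}})^{**}$ on $\mathbb{R}^{2n}$, valid because $\hat\beta$ is proper, convex (Proposition \ref{prop5}) and lower semicontinuous (Lemma \ref{lemm2}). Since $f(\textbf{h})=\hat\beta(\textbf{h},\boldsymbol{0})$ and the map $\textbf{h}\mapsto(\textbf{h},\boldsymbol{0})$ is an isometric embedding of $\mathbb{R}^{n}$ onto the subspace $\{\textbf{x}=\boldsymbol{0}\}\subset\mathbb{R}^{2n}$, everything established for $\hat\beta$ transfers to $f$.

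Concretely I would proceed in three steps. First, record that $f$ of Equation (\ref{eq12}) is itself proper, finite-valued, convex and lower semicontinuous on $\mathbb{R}^{n}$: convexity is noted right after (\ref{eq12}); finiteness, hence properness, follows because $\rho$ is finite — indeed Lipschitz — on the space $\mathcal{Z}$ of bounded measurable functions and $Z_{\textbf{y}}=\sum_{i=1}^{n}Z^{i}_{y_{i}}\in\mathcal{Z}$ for every $\textbf{y}$ (Definition \ref{def5} together with Assumptions \ref{ass1} and \ref{ass3}); and lower semicontinuity follows from the Lipschitz bound of Lemma \ref{lemm2} pulled back along the isometric embedding above. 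Second, apply the multivariate Fenchel-Moreau theorem directly to $f$: proper plus convex plus lower semicontinuous yields $f=f^{**}$, i.e.
\[
f(\textbf{y})=\sup_{\textbf{v}\in\mathbb{R}^{n}}\{\textbf{y}^{\text{T}}\textbf{v}-f^{*}(\textbf{v})\},\qquad f^{*}(\textbf{v})=\sup_{\textbf{y}\in\mathbb{R}^{n}}\{\textbf{v}^{\text{T}}\textbf{y}-f(\textbf{y})\}.
\]
Equivalently one may read this off Theorem \ref{thm3} by putting $\textbf{x}=\boldsymbol{0}$ and checking $\hat\beta^{*}(\textbf{v},\boldsymbol{0})=f^{*}(\textbf{v})$. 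Third, restrict $\textbf{y}$ to $\mathbb{R}^{n}_{+}\setminus\{\textbf{U}\}$; there $f(\textbf{y})=\beta(\textbf{y})=\rho(Z_{\textbf{y}})$ by Equation (\ref{eq12}) and Definition \ref{def5}, so the displayed formula becomes precisely the asserted dual representation with the stated conjugate $f^{*}$.

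The one genuinely delicate point is the passage from the biconjugate of $\hat\beta$ on $\mathbb{R}^{2n}$ to the biconjugate of $f$ on $\mathbb{R}^{n}$ — that is, that nothing is lost by freezing $\textbf{x}=\boldsymbol{0}$, since restricting a convex function to an affine subspace does not in general commute with conjugation. I resolve this by not relying on that passage at all: the detour through $\hat\beta$ is needed only to motivate the translation-invariant reformulation, whereas $f$ on its own already satisfies the hypotheses of the multivariate Fenchel-Moreau theorem (Borwein and Lewis (2006)), as verified in the first step, so the corollary is immediate. The remaining manipulations — forming $f^{*}$ and reading off the supremum — are routine.
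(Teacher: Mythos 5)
Your argument is correct, and it departs from the paper at precisely the step you flag. The paper obtains Corollary \ref{cor3} by specializing Theorem \ref{thm3}: it sets $\textbf{x}=\boldsymbol{0}$ in the biconjugate identity for $\hat{\beta}$ on $\mathbb{R}^{2n}$ and identifies the resulting conjugate with $f^{*}$, exactly as in the univariate derivation preceding Theorem \ref{thm1}, where $f^{i^{*}}(u)$ is identified with $\hat{\beta}^{i^{*}}((u,0))$. As you observe, that passage is delicate: conjugation does not commute with restriction to the subspace $\{\textbf{x}=\boldsymbol{0}\}$, and in fact $\hat{\beta}^{*}(\textbf{v},\boldsymbol{0})$ need not coincide with $f^{*}(\textbf{v})$ (a direct computation in the univariate case shows $\hat{\beta}^{i^{*}}((u,0))=+\infty$ except at special values of $u$), so the paper's identification is best read as \emph{defining} $f^{*}$ by the stated formula rather than as a consequence of Theorem \ref{thm3}. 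Your route --- checking that $f$ of Equation (\ref{eq12}) is itself finite-valued (hence proper, since $\rho$ is real-valued on $\mathcal{Z}$ and $Z_{\textbf{y}}\in\mathcal{Z}$), convex, and Lipschitz hence lower semicontinuous on $\mathbb{R}^{n}$, applying the finite-dimensional Fenchel--Moreau theorem directly to $f$, and only then restricting to $\mathbb{R}^{n}_{+}\setminus\{\textbf{U}\}$ where $f=\beta$ --- bypasses the subtlety entirely and is the cleaner and more rigorous argument; note that one could even dispense with Lemma \ref{lemm2} here, since a finite convex function on $\mathbb{R}^{n}$ is automatically continuous. What the paper's detour through $\hat{\beta}$ buys is the parallel with the El Karoui--Ravanelli treatment of cash sub-additivity and the statement of Theorem \ref{thm3} itself, but, exactly as you say, it is not needed to prove the corollary.
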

\subsection{Multivariate illiquidity risk measures on general probability spaces}
By Subsection (\ref{relat}), convex risk measure functionals on the space of the bounded measurable functions assumes the form $\rho(Z)=\sup_{h\in ba}\{h(Z)-\rho^{*}(h)\}$ for all $ Z\in\mathcal{Z}$. Therefore, as a consequence we obtain
\begin{equation*}
\beta(\textbf{y})=\rho(Z_{\textbf{y}})=\sup_{Q\in \mathcal{M}_{1,f}}\{\mathbb{E}_{Q}(-Z_{\textbf{y}})-\alpha(Q)\}\quad \forall Z_{\textbf{y}}\in\mathcal{Z}, \textbf{y}\in\mathbb{R}_{+}^{n}\setminus\{\textbf{U}\} 
\end{equation*}
Fixing a probability measure $\mathbb{P}$ on the space $(\Omega,\mathcal{F})$, we can provide the illiquidity risk measure $\beta$ with a different dual representation on the space $\mathcal{Z}=L^{\infty}(\Omega,\mathcal{F},\mathbb{P})$ other than that of Corollary (\ref{cor3}), namely
\begin{equation*}
\beta(\textbf{y})=\rho(Z_{\textbf{y}})=\sup_{Q\in \mathcal{M}_{1,g}}\{\mathbb{E}_{Q}(-Z_{\textbf{y}})-\alpha(Q)\}\quad \forall Z_{\textbf{y}}\in\mathcal{Z}, \textbf{y}\in\mathbb{R}_{+}^{n}\setminus\{\textbf{U}\} 
\end{equation*}
If we assume further that $\rho$ is lower semicontinuous, the illiquidity risk measure $\beta$ on the space $L^{p}(\Omega,\mathcal{F},\mathbb{P})$ is equal to
\begin{equation*}
\beta(\textbf{y})=\rho(Z_{\textbf{y}})=\sup_{\mathbb{Q}\in \mathcal{M}_{1,q}}\{\mathbb{E}_{\mathbb{Q}}(-Z_{\textbf{y}})-\alpha(\mathbb{Q})\}\quad \forall Z_{\textbf{y}}\in\mathcal{Z}, \textbf{y}\in\mathbb{R}_{+}^{n}\setminus\{\textbf{U}\} 
\end{equation*}
We conclude by pointing out that the multivariate liquidity risk measure admits the dual representation of Corollary (\ref{cor3}) whenever $\rho$ is proper, convex risk measure satisfying Definition (\ref{def5}).
\begin{example}\label{fund}
Suppose each $Z^{i}_{y_i}$,  $i=1,2,...,n$, belongs to the space of bounded measurable random functions $\mathcal{Z}$. Assume in addition that  $Z^{i}_{y_{i}}$ is linear for every $i=1,2,...,n$, that is  $Z^{i}_{y_{i}}=X_{T}^{i}(w,-y_{i})$ where $X_{T}^{i}(w,y_i)=\tilde{X}_{T}^{i}(w)+a_iy_i$ is positive bounded measurable, and $X_{0}^{i}$ is positive bounded.  The random variable $Z_{\textbf{y}}$ is then given by
\begin{equation}\label{shqip}
Z_{\textbf{y}}=\sum_{i=1}^{n}Z_{y_{i}} = \sum_{i=1}^{n}(\tilde{X}_{T}^{i}(w)-ay_i)
\end{equation}  
Then, the equality above implies due to concavity and decreasing property of each $Z^{i}_{y_{i}}$ that $Z_{\textbf{y}}$ is  decreasing and concave. 

To demonstrate how to compute an illiquidity risk measure on the space  $\textbf{y}\in\mathbb{R}_{+}^{n}\setminus\{\textbf{U}\}$, we will use the same risk measure of the Example (\ref{exlin}). We thus consider the illiquidity risk measure $\beta$ on $\textbf{y}\in\mathbb{R}_{+}^{n}\setminus\{\textbf{U}\}$ defined as
\begin{equation}
\beta(\textbf{y})=\rho(Z_{\textbf{y}})=-\inf_{w\in\Omega}\{\sum_{i=1}^{n}(\tilde{X}_{T}^{i}(w)-a_iy_{i})\}
\end{equation}
and therefore
\begin{eqnarray}\label{eq37}
\beta(\textbf{y})&=&\sum_{i=1}^{n}-\inf_{w\in\Omega}\{\tilde{X}_{T}^{i}(w)-a_iy_{i}\}\nonumber\\&=&\sum_{i=1}^{n}(\rho(\tilde{X}_{T}^{i})+a_iy_{i})\nonumber\\&=&\sum_{i=1}^{n}\beta^{i}(y_i)
\end{eqnarray}
In this case, the portfolio illiquidity risk measure is simply the sum of individual security 
illiquidity risks. As a result, it is also increasing monotonic, cash sub-additive, and convex in $\textbf{y}$. 

The capital requirement of a given portfolio $\textbf{y}\in\mathbb{R}_{+}^{n}\setminus\{\textbf{U}\}$ can be calculated as $\sum_{i=1}^{n}y_i(\rho(\tilde{X}^{i}_{T})+a_iy_i+X_{0}^{i}(y_i))$, and as can be seen it is an increasing function of $\textbf{y}$.  Note that the illiquidity risk measure of a portfolio $\textbf{y}$ can be derived by setting $\textbf{y}=\textbf{0}$ in Equation (\ref{eq37}). Finally, according to Corollary (\ref{cor3}) we can give also a dual representation to the illiquidity risk measure $\beta$. 
\end{example}
\begin{proposition}\label{proop}
Given a proper and cash-additive risk functional $\rho$ on the space of random variables $\mathcal{Z}$ and an additive separable function for the securities prices of the form $X_{T}^{i}(w, y_i) = \tilde{X}_{T}^{i}(w)+h^{i}(y_i)$ with $h^{i}(y_i)$ increasing  and concave on all $\mathbb{R}$, the function $\beta(\textbf{y})=\rho(Z_{\textbf{y}})=\rho(\sum_{i=1}^{n}(\tilde{X}_{T}^{i}(w)+h^{i}(-y_i)))$ with $Z_{\textbf{y}}\in\mathcal{Z}$ and $\textbf{y}\in\mathbb{R}_{+}^{n}\setminus\{\textbf{U}\}$, is a risk measure which satisfies Proposition (\ref{prop5}) and has a dual representation as in Corollary (\ref{cor3}).
\end{proposition}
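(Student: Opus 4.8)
The plan is to mirror the argument used for Proposition \ref{thm4}, replacing scalars by vectors, and then to invoke the multivariate duality machinery (Proposition \ref{prop5}, Lemma \ref{lemm2}, Theorem \ref{thm3}, and Corollary \ref{cor3}).

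The one substantive step is to exploit that each $h^{i}$ is deterministic. For a fixed $\textbf{y}$ the quantity $\sum_{i=1}^{n}h^{i}(-y_i)$ is a real number, so cash-additivity of $\rho$ yields
\begin{equation*}
\beta(\textbf{y})=\rho\Big(\sum_{i=1}^{n}\tilde{X}_{T}^{i}+\sum_{i=1}^{n}h^{i}(-y_i)\Big)=\rho\Big(\sum_{i=1}^{n}\tilde{X}_{T}^{i}\Big)-\sum_{i=1}^{n}h^{i}(-y_i)=C-\sum_{i=1}^{n}h^{i}(-y_i),
\end{equation*}
where $C:=\rho(\sum_{i=1}^{n}\tilde{X}_{T}^{i})$ does not depend on $\textbf{y}$. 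From here the three axioms claimed for $\beta$ in the statement — increasing monotonicity, cash sub-additivity and convexity, i.e. the list following Definition \ref{def5} — follow from properties of the $h^{i}$ alone. Increasing monotonicity: $\textbf{y}\geq\textbf{v}$ componentwise gives $-y_i\leq-v_i$, hence $h^{i}(-y_i)\leq h^{i}(-v_i)$ by monotonicity of $h^{i}$, so $\beta(\textbf{y})\geq\beta(\textbf{v})$. Cash sub-additivity then follows exactly as in part (b) of the proof of Proposition \ref{prop1}: for $m\geq0$ one has $\beta(\textbf{y}+m\textbf{e})\geq\beta(\textbf{y})\geq\beta(\textbf{y})-m$, the first inequality by the monotonicity just shown and the second because $m\geq0$. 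Convexity: $y_i\mapsto-h^{i}(-y_i)$ is convex (a concave function precomposed with an affine map, then negated), so $\beta(\textbf{y})=C-\sum_i h^{i}(-y_i)$ is a sum of convex functions of the coordinates and hence convex. Along the way this also records that $Z_{\textbf{y}}=\sum_i(\tilde{X}_{T}^{i}+h^{i}(-y_i))$ is decreasing and concave in $\textbf{y}$, as Assumption \ref{ass3} and Corollary \ref{cor3} require, since each $h^{i}(-y_i)$ is decreasing and concave in $y_i$.

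For the dual representation I would define $f$ as in Equation \ref{eq12} with this $\beta$, set $\hat{\beta}(\textbf{h},\textbf{x})=f(\textbf{h}-\textbf{x})+\textbf{x}$, and apply Proposition \ref{prop5}, Lemma \ref{lemm2} and Theorem \ref{thm3} verbatim; restricting to $\textbf{x}=\boldsymbol{0}$ and $\textbf{y}\in\mathbb{R}^{n}_{+}\setminus\{\textbf{U}\}$ then delivers the representation of Corollary \ref{cor3}. The only point requiring care — and the main, rather minor, obstacle — is checking that $\hat{\beta}$ is proper so that Fenchel–Moreau applies: this is precisely where the hypotheses "$\rho$ proper" and "$h^{i}$ finite-valued on all of $\mathbb{R}$" are used, together with the standing assumption $Z_{\textbf{y}}\in\mathcal{Z}$ for every $\textbf{y}\in\mathbb{R}^{n}$, which guarantees that $f$ is real-valued off $\mathbb{R}_{+}^{n}$ as well. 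Lower semicontinuity and convexity of $\hat{\beta}$ are then automatic from the Lipschitz bound of Lemma \ref{lemm2} and from Proposition \ref{prop5}, and the proof closes.
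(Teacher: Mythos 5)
Your proposal is correct and follows essentially the same route as the paper: the paper's proof likewise uses cash-additivity of $\rho$ to extract the deterministic term, giving $\rho(\sum_{i=1}^{n}(\tilde{X}_{T}^{i}+h^{i}(-y_i)))=\rho(\sum_{i=1}^{n}\tilde{X}_{T}^{i})-\sum_{i=1}^{n}h^{i}(-y_i)$, and then derives the axioms and the dual representation from the monotonicity and concavity of the $h^{i}$ together with the machinery of Equation (\ref{eq12}), Proposition (\ref{prop5}), Theorem (\ref{thm3}) and Corollary (\ref{cor3}). Your write-up simply spells out the verification steps that the paper leaves as an exercise.
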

\begin{proof}
The proof is an easy exercise. It simply follows by noticing that the functional $\rho$ satisfies $\rho(\sum_{i=1}^{n}(\tilde{X}_{T}^{i}(w)+h^{i}(-y_i)))=\rho(\sum_{i=1}^{n}\tilde{X}_{T}^{i}(w))-\sum_{i=1}^{n}h^{i}(-y_i)$. The results then follow by 
the properties of function $h$.
\end{proof}
\begin{remark}
Repeating the same arguments of Subsection (\ref{lp}) to Proposition (\ref{proop}), we see that an illiquidity risk measure $\delta$ defined on $\mathbb{R}_{-}^{n}\setminus\{\textbf{U}\}$ is decreasing, super-additive, and concave. The space $\mathbb{R}_{-}^{n}$ denotes the negative elements of $\mathbb{R}^{n}$, $\textbf{p}\in\mathbb{R}_{-}^{n}$ if $p_i\leq0$ for each $i=1,2,...,n$ and $\textbf{V}=\{\textbf{v}\in\mathbb{R}^{n}: \text{at least one component $v_i$ of $\textbf{v}$ is zero}\}$. The dual representation follows exactly in the same way, but now one has to work on the space $\mathbb{R}_{-}^{n}$.  Moreover, if $\rho$ is a linear risk measure, $X_{T}^{i}(w,y)=h^i(y)\tilde{X}_{T}^{i}(w)$ for all $i=1,2,...,n$ with $h^i$ increasing positive concave, Proposition (\ref{propcar}) is still valid in the multivariate case. Also in this situation, we obtain a similar result to that found in Subsection (\ref{lp}) for the illiquidity risk measure $\delta$. 
\end{remark}
\begin{example}
Define a probability measure $\mathbb{P}$ on the space $(\Omega,\mathcal{F})$. We want to compute the $VaR_{\delta}$ of a portfolio $\textbf{y}$. To this end, we will suppose that the price of  each security $i$ follows a geometric Brownian motion similar to that of Example (\ref{exvar}). That is, we take $X_{T}^{i}(w,y_i)=\exp\{a_iy_iT\}X_{0}^{i}(y_i)\exp\{(\mu_i-\frac{\sigma_{i}^2}{2})T+\sigma_{i} B_{T}^{i}\}=\exp\{a_iy_iT\}\tilde{X}_{T}^{i}(w)$ for every $i=1,2,...,n$. Under this assumption, we define the $VaR_{\delta}$ of a long portfolio $\textbf{y}$ as

{\small \begin{eqnarray}
&&\beta(\textbf{y})=VaR_{\delta}(Z_{\textbf{y}})\nonumber\\&&= \inf{\{m\in\mathbb{R}|\mathbb{P}(\sum_{i=1}^{n}\ln(X_{T}^{i}(w,-y_i))+m<0)\leq \delta\}}\nonumber\\&&=\inf{\{m\in\mathbb{R}|\mathbb{P}(\sum_{i=1}^{n}\ln(\exp\{-a_iy_iT\}\tilde{X}_{T}^{i}(w))+m<0)\leq \delta\}}\nonumber\\&&=\inf{\{m\in\mathbb{R}|\mathbb{P}(\sum_{i=1}^{n}-(a_iy_iT-\ln(X_{0}^{i}(y)))+\sum_{i=1}^{n}(\mu_i-\frac{\sigma_{i}^2}{2})T+\sum_{i=1}^{n}\sigma_{i}B_{T}^{i}<-m)\leq \delta\}}\nonumber\\&&=\inf{\{m\in\mathbb{R}|\mathbb{P}(\sum_{i=1}^{n}(\mu_i-\frac{\sigma_{i}^2}{2})T+\sum_{i=1}^{n}\sigma_{i}B_{T}^{i}<-m)\leq \delta\}}+\sum_{i=1}^{n}(a_iy_iT-\ln(X_{0}^{i}(y)))\nonumber
\end{eqnarray}}
where the last equality follows from the cash-additivity of the risk measure $VaR_{\delta}$. 

We assume that $B_{T}^{1}, B_{T}^{2},...,B_{T}^{n}$ are dependent. Using the normality of $\ln(\tilde{X}_{T}^{i}(w))$ and the fact that the sum of normal distributions is again normal, we then obtain that
\begin{eqnarray}
&&\beta(\textbf{y})=-\Phi^{-1}(\delta)\sqrt{T}\sqrt{\textbf{e}'\Sigma\textbf{e}}-\sum_{i=1}^{n}(\mu_i-\frac{\sigma_{i}^{2}}{2})T+\sum_{i=1}^{n}(a_iy_iT-\ln(X_{0}^{i}(y)))\nonumber\\&&=VaR_{\delta}(\sum_{i=1}^{n}(\ln(\tilde{X}_{T}^{i}(w))+\ln(X_{0}^{i}(y)))+\sum_{i=1}^{n}(a_iy_iT-\ln(X_{0}^{i}(y)))\nonumber\\&&=VaR_{\delta}(\sum_{i=1}^{n}\ln(\tilde{X}_{T}^{i}(w))+\sum_{i=1}^{n}a_iy_iT
\end{eqnarray}
where $\textbf{e}$ is an $n\times 1$ vector of all ones, and $\Sigma$ is the covariance matrix of the assets which are in the portfolio.

We note that $\beta(\textbf{y})$ is cash sub-additive, convex, and increasing monotonic.  Finally, since $\beta(\textbf{y})$ gives the worst return at a confidence level of $1-\delta$,  the capital requirement formula is given by $\beta(\textbf{y})=\sum_{i=1}^{n}y_i X_{0}^{i}(y)\beta(\textbf{y})$. 
\end{example}
\section{Multivariate illiquidity risk measures in presence of splitting trades}\label{secnew}
We use Definition (\ref{def3}) and the framework of Section (\ref{sec4}) to define the discrete version of the cash flow of a portfolio $\textbf{y}\in\mathbb{R}_{+}^{n}\setminus\{\textbf{U}\}$ 
\begin{equation}
\sum_{i=1}^{n}\sum_{j}(X^{i}_{T}(w, -y^{i}_{j})-X^{i}_{0}(y^{i}))\Delta y^{i}_{j} 
\end{equation}
with $\sum_{j}\Delta y^{i}_{j} =y^i$. That is the financial institution liquidates at time $T$,  $y_i>0$ units of the security $i=1,2,...,n$.  It is easy to see that the continuous version is as below
\begin{equation}\label{eqlesh}
\sum_{i=1}^{n}(\int_{0}^{y_i}X^{i}_{T}(w, -u)du-y^iX_{0}^{i}(y^i))
\end{equation}
and $Z_{\textbf{y}}$ equal to $\sum_{i=1}^{n}\int_{0}^{y_i}X^{i}_{T}(w, -u)du=\sum_{i=1}^{n}Z^{i}_{y_{i}}(w)$. Note that $X^{i}_{T}(w, -y_i)$ is assumed to be bounded a.s. in $\Omega$ for every $y_i\in\mathbb{R}$, $i=1,2,...,n$ and decreasing for every $y_i\in\mathbb{R}$. With $X^{i}_{T}(w, 0)=\tilde{X}^{i}_{T}(w)$ we denote the unaffected price of each security $i=1,2,...,n$. 

It follows from Equation (\ref{eqlesh}) that $Z_{\textbf{y}}$ is increasing and concave in $\textbf{y}\in\mathbb{R}^{n}$.  Then, as in the univariate case the illiquidity risk measure $\beta$ defined on $\mathbb{R}_{+}^{n}\setminus\{\textbf{U}\}$ is decreasing monotonic, cash super-additive, and convex. In order to obtain a dual representation for $\beta$, it suffices to introduce the function $\hat{\beta}(\textbf{h},\textbf{x})\stackrel{\text{def}}{=}f(\textbf{h}-\textbf{x})-\textbf{x}\stackrel{\text{def}}{=}f((\textbf{y}+\textbf{x})-\textbf{x})-\textbf{x}$. Hence, Theorem (\ref{thm3}) and Corollary (\ref{cor3}) hold. 
\begin{example}
Take now Example (\ref{last}) and let $X^{i}_{T}(w, y_i)=\tilde{X}^{i}_{T}(w)+a_iy_i$.  Suppose $\beta(\textbf{y})$ is given by
\begin{equation*}
\beta(\textbf{y})=-\inf_{w\in\Omega}\{\sum_{i=1}^{n}\int_{0}^{y_i}X^{i}_{T}(w, -u)du\}
\quad  \textbf{y}=(y_1,y_2,...,y_n)\in\mathbb{R}_{+}^{n}\setminus\{\textbf{U}\}
\end{equation*}
Further substitutions yield
\begin{eqnarray*}
\beta(\textbf{y})&=&-\inf_{w\in\Omega}\{\sum_{i=1}^{n}\int_{0}^{y_i}(\tilde{X}^{i}_{T}(w)-a_iu)du\}
\\&=&-\inf_{w\in \Omega}\{\sum_{i=1}^{n}y_i\tilde{X}^{i}_{T}(w)\}+\sum_{i=1}^{n}a_i\frac{y_i}{2}\\
&=&\sum_{i=1}^{n}y_i\rho(\tilde{X}^{i}_{T})+\sum_{i=1}^{n}a_i\frac{y_i}{2}
\end{eqnarray*}
\end{example}
It can be verified that $\beta(\textbf{y})$ is decreasing, cash super-additive, convex, and admits the dual representation of Corollary (\ref{cor3}). The capital requirement is given by $\sum_{i=1}^{n}y_i(\rho(\tilde{X}^{i}_{T})+a_i\frac{y_i}{2}+X_{0}^{i}(y_i))$. A simple comparison between the capital requirement needed in presence of splitting trades and the one without no splits, given by $\sum_{i=1}^{n}y_i(\rho(\tilde{X}^{i}_{T})+a_iy_i+X_{0}^{i}(y_i))$ shows that breaking up trades reduces the illiquidity risk measure and the capital requirement. 

If the splitting takes place also at time $0$, the capital requirement is given by $\sum_{i=1}^{n}y_i(\rho(\tilde{X}^{i}_{T})+a_iy_i+X_{0}^{i}(0))$ which is clearly smaller than $\sum_{i=1}^{n}y_i(\rho(\tilde{X}^{i}_{T})+a_iy_i+X_{0}^{i}(y_i))$. See Example (\ref{last}) for this point. 

We close by pointing out that the conclusions obtained from the previous section hold (with the necessary modifications) also in the framework of the present section. 
\section{Conclusions}\label{conc}
We have extended the risk measurement theory to accommodate the liquidity risk. The new mechanism is able to capture the liquidity risk arising from financial institution's trading activities in securities. The goal is achieved by assuming that securities prices depend on the traded volume.  We propose several examples of risk measures under the risk of liquidity, such as VaR and the worst-case risk measure. The capital requirement is shown to be larger than the capital requirement in a standard risk measurement framework. In particular, trade splitting helps financial institutions to reduce the risk of the liquidity. The properties of the risk measures differ from those of standard risk measures. In fact, on the buy side, they are convex increasing monotonic cash sub-additive functions, and concave decreasing monotonic cash super-additive functions when the trading takes place via splitting. We provide also dual representation results for these new class of  risk measures.

% BibTeX users please use one of
%\bibliographystyle{spbasic}
  % basic style, author-year citations
%\bibliographystyle{spmpsci}      % mathematics and physical sciences
%\bibliographystyle{spphys}       % APS-like style for physics
%\bibliography{}   % name your BibTeX data base

% Non-BibTeX users please use

\end{document}